\theoremstyle{plain}
\newtheorem{propn}{Proposition}{}{}
\newtheorem{thm}{Theorem}{}{}
\newtheorem{lem}{Lemma}{}{}
{}{}
\newtheorem*{cor*}{Corollary}{}{}
{}{}
\newtheorem*{Def*}{Definition}{}{}
{}{}
{}{}
\newcommand{\commentout}[1]{}
\newcommand{\Real}{\mathbb{ R}}
\newcommand{\Complex}{\mathbb{ C}}
\newcommand{\modulus}[1]{|\!#1\!|}
\newcommand{\norm}[1]{|\!| #1 |\!|}
\newcommand{\ket}[1]{|#1\rangle}
\newcommand{\bra}[1]{\langle #1|}
\newcommand{\inp}[2]{\langle #1|#2\rangle}
\newcommand{\inpr}[3]{\langle #1|#2|#3\rangle}
\renewcommand{\vec}[1]{{\bf #1}}
\newcommand{\conj}[1]{\overline{#1}}
\newcommand{\hconj}[1]{{#1}^{\dagger}}
\newcommand{\tr}{\text{\tt Tr}}
\newcommand {\pj}[1]{\ket{#1}\!\bra{#1}}
\newcommand{\acti}{\!\cdot\!}
\newcommand{\unitary}[1]{{\mathcal{U}(#1)}}
\newcommand{\unp}{\mathcal{U}}
\def\dmn#1#2{{#1^{(#2)}}} 
\def\hilbert#1{\mathcal{#1}}
\newcommand{\be}{\begin{enumerate}}
\newcommand{\ee}{\end{enumerate}}
\newcommand{\bi}{\begin{itemize}}
\newcommand{\ei}{\end{itemize}}
\newcommand{\beq}{\begin{equation}}
\newcommand{\eeq}{\end{equation}}
\newcommand{\beqx}{\begin{displaymath}}
\newcommand{\eeqx}{\end{displaymath}}
\newcommand{\beqa}{\begin{eqnarray}}
\newcommand{\eeqa}{\end{eqnarray}}
\newcommand{\beqax}{\begin{eqnarray*}}
\newcommand{\eeqax}{\end{eqnarray*}}
\title{Entangling capacity of operators}
\author{
Manas ~K. ~Patra\\ School of Computing, Data and Mathematical Sciences\\Western Sydney University\\ Victoria Road, Rydalmere, NSW 2116, Australia }
\date{}
\begin{document}
\maketitle

\begin{abstract}
Given a unitary operator $U$ acting on a composite quantum system what is the entangling capacity of $U$? This question is investigated using a geometric approach.  The entangling capacity,  defined via  metrics on the unitary groups, leads to a {\em minimax} problem. The dual, a  {\em maximin} problem,  is investigated in parallel and yields some familiar entanglement measures. A class of entangling operators, called generalized control operators is defined. The entangling capacities and other properties  for this class of operators is studied.   
\end{abstract}
%certain distance metrics  on the set of unitary operators on the tensor product Hilbert space representing the composite system. 
%\maketitle
\section{Introduction}
An earlier work by the author \cite{Patra1} proposed families of metrics and {\em pseudometrics} on the group of unitary operators in finite-dimensional Hilbert spaces. The current work investigates the application of these geometric objects to quantum entanglement. It is unnecessary to expound on the fascination or usefulness  of entanglement. For many it is the most mysterious phenomenon in quantum mechanics.  I  refer to some of the 
%cite relevant papers
 excellent reviews on the topic of entanglement \cite{Horodecki09, Gunhe}. An important problem of entanglement is quantifying it. Simply put, the question is: given two (mixed) states $\rho$ and $\sigma$ on a composite or product quantum system how may we reasonably assert one of the states, say $\rho$ is more entangled than $\sigma$. The most common approach is to use an entanglement measure. The reviews mentioned above have extensive discussion on entanglement measures. One particular class  of entanglement measures is associated with the geometric notion of distance. All these entanglement measures (geometric or otherwise) refer to quantum states, both pure and mixed, in a product Hilbert space. However, we must keep in mind that any entangled state is an outcome of some interaction between the two initially independent/uncoupled subsystems characterised by  coupling the two subsystems. So, if we have to create entangled quantum states in the laboratory we must implement some interaction Hamiltonian between two subsystems and the corresponding time evolution operator gives us a one-parameter group of unitary operators (assumed time-independent Hamiltonians for simplicity). All this, of course, is textbook knowledge. But it raises the following question: {\em given a unitary operator $U$ on the product Hilbert space $H_1\otimes H_2$, how do we characterise the entangling capacity of $U$?}
The question is tackled from a geometric viewpoint in this paper. Informally, the {\em entangling  capacity}  measures how well unitary operators can be approximated by local (product) unitary operators. The `quality of approximation' is measured using distance metrics. The farther the operator to the manifold of product unitary operators higher is its capacity. 

In this paper I mostly restrict to the ideal noiseless case implying unitary evolution. In general, quantum operations, including measurements, are completely positive (CP) maps \cite{Nielsen}. The geometric approach of this paper can be extended to CP maps. A detailed summary of the paper follows. 

The main results of \cite{Patra1} are stated in Section 2. These are needed in the subsequent analysis. However, Subsection \ref{sec:gen_def_metric} is new. In definition of the distance between 2 operators on quantum states (see \eqref{eq:basic_metric1} the maximisation is done over all states. New metrics may be defined by restricting the states over which maximization is done to appropriate subsets.  This is done in Subsection \ref{sec:gen_def_metric}. These general metrics are used in the definition of entangling capacity. 

In Section 3, I define the {\em entangling capacity} $C(U)$ of a unitary operator $U$ acting on the tensor product $\mathcal{H}\otimes \mathcal{K}$ of 2 Hilbert spaces and discuss consequences. $C(U)$ is defined as the distance of $U$ to the group of local unitary operators. This is an ``intrinsic'' definition as $C(U)$ may be considered as a measure of approximating $U$ by local operations. The definition of $C(U)$ leads to a {\em minimax} problem \cite{Rockafellar}. Any minimax problem can be dualised in an obvious way---simply change the order of ``max''  and ``min'' operations. The ``dual capacity'' of $C(U)$ is denoted by $C_E(U)$, where $E$ is a geometric entanglement measure on states. Then $C_E(U)$ is  maximum  entanglement (as measured by $E$) possible among all states generated by $U$ from local states.  Moreover, the  minimax inequality \cite{Rockafellar} implies $C_E(U) \leq C(U)$. The question of equality of the 2 capacities is  discussed later. The definition of the most general class of capacities via the geometric approach may be obtained by varying the two sets over which the optimizations are carried out (see \eqref{eq:entcapG}).  

Section 4 deals with a special class of entangling unitary operators on $\mathcal{H}\otimes \mathcal{K}$, called \emph{generalized control} operators.  All control gates of quantum computing belong to this class. I discuss some properties of these operators and carry out some computations on their capacities . 

In the final section,  I discuss question of equality of the entangling capacity $C(U)$ and its dual $C_E(U)$ called {\em minimax} equality. Actually, this is the main problem  of a variety of minimax theorems\cite{Du}. This is a difficult problem here because of inadequate convexity and other properties of the quantities of interest. Perhaps some extension of the scope of the definitions to include  more general quantum operations could help with the question of minimax equality. These issues are discussed here. Finally, I suggest some potential applications in the study of {\em open} quantum systems using entangling capacities of operators. 

The question of entangling capacity of quantum operations has attracted the attention of researchers from the early days of quantum computing \cite{Zanardi, Bennett03, Leifer,  Chen1, Bauml}. These papers address different issues of entangling power, generally motivated by operational issues. They define entangling capacity (sometimes called entangling power) of $U$ via the states it generates from ``local'' states. This is similar to the dual capacity $C_E(U)$ in this paper. 
\section{Metrics and pseudometrics on operators} 
The main definitions and results of \cite{Patra1} are summarised in this section. First, let us fix some notations used in the rest of the paper.  Let $\mathcal{H}$ be a Hilbert space of dimension $m$ and $B(\mathcal{H})$ the space of operators on it. The following notations will be used consistently. 

\bi
\item
$L(\mathcal{H}) \subset B(\mathcal{H})$ is the set of hermitian operators on $\mathcal{H}$.
\item
$\unitary{\mathcal{H}}$ will denote the group of unitary operators on $\mathcal{H}$.
\item
$S(\mathcal{H}) \subset L(\mathcal{H})$ is the convex set of positive semidefinite operators on $\mathcal{H}$ with trace 1. This is the set of (mixed) quantum states. 
\item
For $A\in L(\mathcal{H})$ and $\rho\in S(\mathcal{H})$ let $A\cdot \rho = A\rho \hconj{A} $ denote the action of $A$ on a state $\rho$. 
\item
$\mathrm{CP}(\mathcal{H})$ is the set of completely positive (CP) maps or superoperators on states. These are linear maps on $L(\mathcal{H})$. 
\item
If $\mathcal{H}$ and $\mathcal{K}$ are 2  Hilbert spaces of dimensions $m$ and $n$ respectively then $\Pi_p(\mathcal{H}\otimes \mathcal{K}) \subset S(\mathcal{H}\otimes \mathcal{K})$ will denote the set of pure product states and $\Pi_u(\mathcal{H}\otimes \mathcal{K}) \subset \mathcal{U}(\mathcal{H}\otimes \mathcal{K})$ the subgroup of product unitary operators. Members of  $\Pi_u$ will also be called local (unitary) operators. In case $m=n$, $\Pi_u$ will also denote its (semidirect) extension by the swap operator $S_w(\alpha\otimes \beta) =  \beta\otimes \alpha$.
\item
The convex hull of $\Pi_p$ denoted by $S_{sep}( \mathcal{H}\otimes \mathcal{K})$ is called the set of {\em separable} states. 
\item
In all cases, if the domain is clear from the context I drop it. Thus, write $S$ for $S(\mathcal{H})$, $\Pi_p$ for $\Pi_p(\mathcal{H}\otimes \mathcal{K})$, $\Pi_u$ in place of $\Pi_u(\mathcal{H}\otimes \mathcal{K})$ etc. 
\ei 
%%%%%%%%%%%%%%%%%%%%%%%%%%%%%%%%%%%%%%%%%%%%%%%%%%%%
\subsection{The metrics and some properties}
Let $\Delta$ be a metric on the state space $S(\mathcal{H})$ and $\rho$ a fixed state. We assume that $\Delta$ is unitary invariant: $\Delta(U\cdot \mu , U\cdot \nu) = \Delta(\mu, \nu)$ for all $U\in \unp$ and $\mu, \nu \in S(\mathcal{H})$. Let $A, B: S(\mathcal{H})\rightarrow S(\hilbert{H})$ be CP maps. Define
\beq \label{eq:basic_pseudo}
\Delta_\rho(A, B) = \Delta(A\cdot\rho, B\cdot\rho) 
\eeq
\begin{propn}
$\Delta_\rho$ is a pseudometric on $\mathrm{CP}(\mathcal{H})$. It is unitarily (left) invariant in the sense that $\Delta_\rho(UE, UF) = \Delta_\rho(E, F)$. The function 
\beq \label{eq:basic_metric1}
\Delta(E, F) = \sup_\rho \Delta_\rho(E, F) 
\eeq
is a unitarily invariant (both left and right) metric on $\mathrm{CP}(\mathcal{H})$.  
\end{propn}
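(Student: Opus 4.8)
The plan is to verify the three assertions in turn, using only the metric axioms for $\Delta$ on $S(\mathcal{H})$ together with its unitary invariance. First, to see that $\Delta_\rho$ is a pseudometric on $\mathrm{CP}(\mathcal{H})$: nonnegativity and symmetry are immediate from the corresponding properties of $\Delta$, and $\Delta_\rho(A,A) = \Delta(A\cdot\rho, A\cdot\rho) = 0$. The triangle inequality follows by pulling back the triangle inequality for $\Delta$ along the three points $A\cdot\rho$, $B\cdot\rho$, $C\cdot\rho$. It is only a pseudometric, not a metric, because $\Delta_\rho(A,B)=0$ merely forces $A\cdot\rho = B\cdot\rho$, which does not imply $A=B$ when $\rho$ fails to have full support (or more generally when $A,B$ differ only on the complement of the support of $\rho$). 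For the left invariance, $\Delta_\rho(UA, UB) = \Delta\bigl((UA)\cdot\rho,\,(UB)\cdot\rho\bigr) = \Delta\bigl(U\cdot(A\cdot\rho),\,U\cdot(B\cdot\rho)\bigr) = \Delta(A\cdot\rho, B\cdot\rho)$, where the middle equality uses that the action is a left action and the last uses unitary invariance of $\Delta$; here one should note that $U\in\mathcal{U}(\mathcal{H})$ acts as a CP map $\mu\mapsto U\mu U^\dagger$, so $UA$ is shorthand for the composite superoperator.

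Next I would check that $\Delta(E,F) = \sup_\rho \Delta_\rho(E,F)$ is a metric on $\mathrm{CP}(\mathcal{H})$. Symmetry and nonnegativity pass through the supremum trivially, and the triangle inequality survives because for each fixed $\rho$ we have $\Delta_\rho(E,G)\le \Delta_\rho(E,F)+\Delta_\rho(F,G)\le \Delta(E,F)+\Delta(F,G)$, and taking the supremum over $\rho$ on the left preserves the bound. The one genuinely substantive point is definiteness: $\Delta(E,F)=0$ means $\Delta_\rho(E,F)=0$, i.e. $E\cdot\rho = F\cdot\rho$, for \emph{every} state $\rho$. Since $S(\mathcal{H})$ spans $L(\mathcal{H})$ (it contains, e.g., a basis of rank-one projectors) and $E,F$ are linear on $L(\mathcal{H})$, agreement on all states forces $E=F$ on all of $L(\mathcal{H})$, hence $E=F$ as superoperators. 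One should also remark that the supremum is finite — this can be argued by compactness of $S(\mathcal{H})$ together with continuity of $\rho\mapsto\Delta_\rho(E,F)$, or simply inherited from the boundedness of $\Delta$ on the compact state space — so that $\Delta$ is real-valued.

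Finally, for the two-sided unitary invariance of $\Delta$: left invariance is immediate from the left invariance of each $\Delta_\rho$, since $\Delta(UE,UF) = \sup_\rho \Delta_\rho(UE,UF) = \sup_\rho \Delta_\rho(E,F) = \Delta(E,F)$. Right invariance is where the supremum does real work: $\Delta(EU, FU) = \sup_\rho \Delta\bigl(E\cdot(U\cdot\rho),\,F\cdot(U\cdot\rho)\bigr)$, and as $\rho$ ranges over all of $S(\mathcal{H})$ the state $U\cdot\rho = U\rho U^\dagger$ also ranges over all of $S(\mathcal{H})$ (because $\rho\mapsto U\cdot\rho$ is a bijection of the state space, with inverse $\mu \mapsto U^\dagger\mu U$). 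Hence the supremum is unchanged and $\Delta(EU,FU)=\Delta(E,F)$. I expect the main obstacle to be purely expository rather than mathematical: one must be careful and consistent about the convention that a unitary $U$ is simultaneously an element of the group $\mathcal{U}(\mathcal{H})$ and the CP map $\mu\mapsto U\mu U^\dagger$, so that the compositions $UE$, $EU$ and the actions $U\cdot\rho$ all refer coherently to the same object; once that bookkeeping is fixed, every step above is a one-line consequence of the axioms.
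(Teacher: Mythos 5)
Your proposal is correct, and it follows exactly the route the paper intends: the paper itself gives no proof (it simply remarks that ``all the statements are easily proved'' and cites the earlier work), and your argument --- pulling the pseudometric axioms back along $\rho\mapsto A\cdot\rho$, using unitary invariance of $\Delta$ for left invariance, the spanning of $L(\mathcal{H})$ by states plus linearity of superoperators for definiteness, and the bijectivity of $\rho\mapsto U\cdot\rho$ on $S(\mathcal{H})$ for right invariance --- is the standard verification being alluded to. Your observation that definiteness holds only at the level of CP maps (so that $U$ and $cU$, $|c|=1$, are identified) is consistent with the paper's later remark that $d$ is only a metric on the projective unitary group.
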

Recall that a pseudometric satisfies all the properties of a metric (symmetry, triangle inequality) but is only positive {\em semidefinite}. All the statements are easily proved (\cite{Patra1}). 

%%%%%%%%%%%%%%%%%%%%%%%%%%%%%%%%%%%%%%%%%%%%%%%%%%%%%
\commentout{
For unitary operators, $X,Y$,  $\Delta_\rho(X,Y) = 0$ if and only if $\rho$ satisfies $\hconj{X}Y\acti\rho = \rho$. Hence, if the action of the unitary group is linear (on the space of operators) which we also assume, then $\rho$ is an eigenstate of $\hconj{X}Y$ with eigenvalue 1 or alternatively, it is a {\em fixed point}. 
If $X\in \mathcal{U}(\mathcal{H})$ is a unitary operators, its action on $S(\mathcal{H})$  is defined by $\rho\rightarrow U\cdot\rho = X\rho \hconj{X}$.  Then $ \mathcal{U}(\mathcal{H}) \subset CP(\mathcal{H})$. 
}
So any unitary invariant metric on  $S(\mathcal{H})$ will induce a corresponding pseudometric on $CP(\mathcal{H})$ with the above properties. In particular, consider the  {\em trace} norm (also called $L_1$ norm): $\norm{A}_1 = \tr (|A|)$ (the trace of $|A|$) where $|A| = \sqrt{\hconj{A}A}$.\footnote{$\norm{A}_1$ is the sum of singular values of $A$ \cite{Bhatia}.}
\def\dmn#1#2{{#1^{(#2)}}}
The trace-norm induces a unitary invariant metric 
\[\dmn{d}{1}(\rho, \sigma) = \frac{\norm{\rho - \sigma}_1}{2}\]
which has close analogy with classical probability distance \cite{Nielsen}. We  use $d_\rho$ to denote the pseudometric induced by the metric $\dmn{d}{1}$ in the state $\rho$: 
\beq\label{eq:tr_dist}
d_\rho(A, B) = \dmn{d}{1}(A\cdot \rho, B\cdot \rho)
\eeq
for CP maps $A,B$. In particular, for  unitary operators $U,V$ unitary invariance of $\dmn{d}{1}$ implies $d_\rho(U,V) = d_\rho(I, U^\dagger V)$. If $\rho = \pj{\psi} $ is a pure state then write $d_\psi$ instead of $d_{\pj{\psi}}$. Using known properties of the trace norm \cite{Nielsen}
\beq \label{def:psi-dist}
d_{\psi} (U,V) = (1 - |\inpr{\psi}{U^{\dagger}V}{\psi}|^2 )^{1/2}
\eeq

%%%%%%%%%%%%%%%%%%%%%%%%%%%%%%%%%%%%%%%%%%%%%%%%%%%%%%%%%%%%%%%%%%%%%%%%%%%%
In general, we can replace $\sup$ (supremum) by $\max$ (maximum) in \eqref{eq:basic_metric1} since $S(\mathcal{H})$ is compact and the function  $\rho \rightarrow \Delta_{\rho}(A ,B)$ is continuous for fixed $A, B$. Suppose now the metric $\Delta$ is convex function on $\mathcal{D}$:
\[ 
\begin{split}
 \Delta(\lambda \rho_1 + (1 - \lambda) \rho_2, \lambda \sigma_1 + (1 - \lambda) \sigma_2)  \leq & \;
  \lambda \Delta(\rho_1 , \sigma_1)  + ( 1 - \lambda) \Delta ( \rho_2, \sigma_2), \\
& 0 \leq \lambda \leq 1. \\
\end{split}
\]
Then the maximum occurs at an extreme point of $S(\mathcal{H})$, that is, a pure state. This follows easily from the fact that any state $\rho$ is a convex combination of pure states and convexity of $\Delta$. In particular, if the metric is induced by a norm on $B(\mathcal{H})$ then it is convex. Hence if $\Delta$ is the trace distance ($L_1$ metric) or the Frobenius distance ($L_2$ metric) the maximum of the corresponding induced metric $d_\rho(E, F)$ occurs at a pure state.

Henceforth,  I assume  that $\Delta$ is induced by the trace or $L_1$ norm. This fixing will remain true in rest of the paper unless stated otherwise. Observe that, $d$ is {\em not} a metric on the group $\mathcal{U}(\mathcal{H})$ because it is not necessarily positive for distinct elements. For example, $d(U, cU) = 0, |c|= 1$. However, we continue to call it metric on $\mathcal{U}(\mathcal{H})$ with the understanding that we identify $U$, $cU$. Actually it  is a metric on the projective group $\text{P}\mathcal{U}_n = \mathcal{U}_n/Z(\mathcal{U}_n)$ ($n=\dim(\mathcal{H})$), the quotient group obtained by factoring out the centre. I state some  useful properties the metric $d$ from \cite{Patra1}. 
\begin{thm}\label{thm:properties}
For any pair of unitary matrices $U,V$ of order $n$, the function $d(U,V)$ is a pseudometric satisfying the following. 
\be[{\em 1.}]
\item {{\em Projective invariance.}} 
 For any complex number $c$ of modulus 1,
\[ d(U,cV) = d(cU,V) = d(U,V) \]
\item
$d(U, V) = 0$ if and only if $U = cV$ for some complex number $c$ with modulus 1. 
\item {{\rm Unitary invariance. }}
$d$ is invariant under left and right multiplications in the group
  $\mathcal{U}_n$.
\item
$d(U, V) = 1$ iff there is a unit vector $\alpha$ such that $U\alpha$
  and $V\alpha$ are orthogonal. 
\item
For unitary operators $U, V, W, X$
\beq\label{eq:monotone}
d(UV, WX) \leq d(U, W) + d(V, X)
\eeq
\ee
\end{thm}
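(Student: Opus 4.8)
The plan is to reduce every claim to the explicit pure-state expression \eqref{def:psi-dist} together with two facts already recorded above: the supremum in \eqref{eq:basic_metric1} is attained (compactness of $S(\mathcal{H})$, continuity of $\rho\mapsto d_\rho$), and for the $L_1$ metric it is attained at a pure state (convexity). Writing $W=\hconj{U}V$, which is again unitary, and abbreviating $w(\psi)=\inpr{\psi}{W}{\psi}$, this gives
\[
d(U,V)^2 \;=\; 1 - \min_{\norm{\psi}=1} |w(\psi)|^2 ,
\]
so items 1--4 become statements about the set $\{\, w(\psi) : \norm{\psi}=1 \,\}$ (the numerical range of $W$). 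That $d$ is a pseudometric — nonnegativity, symmetry, triangle inequality — is inherited from the Proposition, since $d$ restricted to $\unitary{\mathcal{H}}$ is the pullback of a genuine metric on $\mathrm{CP}(\mathcal{H})$ along the (non-injective) map $U\mapsto(\rho\mapsto U\cdot\rho)$; positive-definiteness as a function of matrices fails exactly because that map has kernel the centre, which is the content of item 2.

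Items 1, 3 and 4 then follow by short manipulations of the formula above. For item 1, replacing $V$ by $cV$ (resp.\ $U$ by $cU$) multiplies $w(\psi)$ by $c$ (resp.\ $\conj{c}$), of modulus $1$, so the formula is unchanged — and this holds pointwise in $\psi$. For item 3, left multiplication leaves $W$ unchanged since $\hconj{(XU)}(XV)=\hconj{U}V$, so each $d_\psi$ is unchanged; right multiplication sends $W$ to $\hconj{X}WX$, and since $\psi\mapsto X\psi$ maps the unit sphere bijectively onto itself, $\min_\psi|w(\psi)|$ is unchanged — note this invariance survives only the maximisation, not pointwise. For item 4, $|w(\psi)|^2\ge 0$ forces $d\le 1$ always, and $d(U,V)=1$ iff the maximising pure state $\psi$ has $w(\psi)=0$, i.e.\ $\inp{U\psi}{V\psi}=0$; take $\alpha=\psi$, and conversely any such $\alpha$ already gives $d_\alpha(U,V)=1$.

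The one part needing a genuine argument is item 2. The ``if'' direction is immediate: $U=cV$ gives $W=\conj{c}I$, so $|w(\psi)|\equiv 1$ and $d=0$. For ``only if'', $d(U,V)=0$ forces $|w(\psi)|=1$ for every unit $\psi$; by Cauchy--Schwarz $|w(\psi)|\le\norm{\psi}\,\norm{W\psi}=1$ with equality only when $W\psi$ is a scalar multiple of $\psi$, so every vector is an eigenvector of $W$, whence $W=\lambda I$ and unitarity forces $|\lambda|=1$; thus $V=\lambda U$. Finally item 5 is purely formal: $d$ being a pseudometric, the triangle inequality and item 3 give
\[
d(UV, WX)\;\le\; d(UV, WV) + d(WV, WX)\;=\; d(U,W) + d(V,X),
\]
the first term by right invariance, the second by left invariance, which is \eqref{eq:monotone}.

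The main obstacle is bookkeeping rather than depth: one must keep straight that the pseudometric $d_\psi$ in a fixed state is only \emph{left} invariant, whereas the maximised quantity $d$ is both left and right invariant — and it is precisely this asymmetry that powers the one-line proof of \eqref{eq:monotone}. The only non-routine step is the claim in item 2 that $|\inpr{\psi}{W}{\psi}|=1$ for all $\psi$ forces $W$ to be scalar, and that is dispatched by the standard Cauchy--Schwarz / every-vector-an-eigenvector observation.
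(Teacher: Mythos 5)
Your proof is correct. Note that the paper itself gives no proof of this theorem --- it is imported verbatim from \cite{Patra1} (``I state some useful properties the metric $d$ from \cite{Patra1}'') --- so there is nothing in this document to compare against line by line; but your reduction to $d(U,V)^2 = 1 - \min_{\|\psi\|=1}|\inpr{\psi}{\hconj{U}V}{\psi}|^2$ via \eqref{def:psi-dist} and the extreme-point argument is exactly the natural route, and every step checks out: the pseudometric axioms pulled back from Proposition~1, projective and left/right unitary invariance read off from how $W=\hconj{U}V$ transforms, the Cauchy--Schwarz equality case forcing $W$ scalar in item~2, attainment of the minimum on the compact unit sphere for item~4, and the two-sided invariance powering the one-line proof of \eqref{eq:monotone}. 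Your remark that $d_\psi$ is only left-invariant pointwise while right-invariance holds only after maximisation is accurate and worth keeping.
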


\begin{thm} \label{thm:eigenChar}
Let $z_i = e^{i\theta_i}, \; i= 1,\ldots n$ be the eigenvalues (possibly
with repititions) of a unitary operator $W$. Let $\theta_i$'s be ordered
such that $0\leq \theta_1\leq \theta_2\leq \cdots \leq \theta_n < 2\pi$. Let $C$ be smaller arc containing all the points $z_i$. There are two alternatives. Let $\alpha = l(C)$, the length of $C$. It is the angle subtended at the center. 
\be
\item
$C$ lies in the interior of a semicircle and hence $\alpha < \pi$. Then
\beq \label{eq:basic_formula}
d(I, W) = \sin{(\alpha/2)}
\eeq
\item
$C$ contains a semicircle: $\alpha \geq \pi$. Then $d(I, W) = 1$. 
\ee
\end{thm}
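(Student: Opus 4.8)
The plan is to reduce Theorem~\ref{thm:eigenChar} to a two-dimensional fact about the numerical range of $W$ and then finish with elementary plane geometry. By the unitary invariance of $d$ (Theorem~\ref{thm:properties}, item~3) it is enough to treat $d(I,W)$; since the trace distance is convex, the maximum in \eqref{eq:basic_metric1} is attained at a pure state, so \eqref{def:psi-dist} with $U=I$, $V=W$ gives
\[
d(I,W) \;=\; \max_{\norm{\psi}=1}\bigl(1 - |\inpr{\psi}{W}{\psi}|^2\bigr)^{1/2}.
\]
As $t\mapsto(1-t^2)^{1/2}$ is decreasing on $[0,1]$, this equals $(1-\delta^2)^{1/2}$ with $\delta = \min_{\norm{\psi}=1}|\inpr{\psi}{W}{\psi}|$, so everything comes down to computing $\delta$.

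First I would describe the set $\{\inpr{\psi}{W}{\psi}:\norm{\psi}=1\}\subset\Complex$. Since $W$ is normal, expanding $\psi$ in an orthonormal eigenbasis $\{v_i\}$ and setting $p_i = |\inp{v_i}{\psi}|^2$ gives $\inpr{\psi}{W}{\psi} = \sum_i p_i z_i$ with $p_i\geq0$ and $\sum_i p_i = 1$; conversely every such convex combination is realised by the unit vector $\sum_i\sqrt{p_i}\,v_i$. Hence this set equals the convex hull $K = \mathrm{conv}\{z_1,\dots,z_n\}$ (the numerical range of $W$), and $\delta = \mathrm{dist}(0,K)$, the Euclidean distance from the origin to $K$. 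The claim thus reduces to: $\mathrm{dist}(0,K) = \cos(\alpha/2)$ in case~(1), and $0\in K$ in case~(2).

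For case~(2), $\alpha\geq\pi$, I would argue by contradiction: if $0\notin K$ then, $K$ being compact and convex, a separating argument confines all the $z_i$ strictly to one side of some line through the origin, i.e.\ to an open arc of length $\pi$; but then the smallest arc $C$ containing them has length $<\pi$, contradicting $\alpha\geq\pi$. Hence $0\in K$, $\delta=0$, and $d(I,W)=1$. For case~(1), $\alpha<\pi$, let $z_a=e^{i\phi_1}$ and $z_b=e^{i\phi_2}$ be the endpoints of $C$ (both are among the $z_i$, by minimality of $C$), so $\phi_2-\phi_1=\alpha$ and all the $z_i$ lie on the minor arc from $z_a$ to $z_b$. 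Then the chord $[z_a,z_b]$, whose line I call $\ell$, is an edge of $K$, and since it subtends an angle $\alpha<\pi$ at the centre, the origin lies strictly on the side of $\ell$ opposite to $K$. Writing $\ell=\{x:\inpop{x}{\hat n}=c\}$ with $\hat n$ the unit normal pointing into the $K$-side (so $c=\mathrm{dist}(0,\ell)>0$), every $q\in K$ satisfies $|q|\geq\inpop{q}{\hat n}\geq c$ by Cauchy--Schwarz, with equality at the foot $M$ of the perpendicular from $0$ to $\ell$, which is the midpoint of $[z_a,z_b]$ and hence lies in $K$. Thus $\delta=|M|=\cos(\alpha/2)$ and $d(I,W)=(1-\cos^2(\alpha/2))^{1/2}=\sin(\alpha/2)$, using $0\leq\alpha/2<\pi/2$.

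The routine points I am not spelling out are that the perpendicular bisector of a chord passes through the centre (which yields $|M|=\cos(\alpha/2)$) and the combinatorial description of $C$ as the complement of the largest gap among the ordered $\theta_i$. The step I expect to need the most care is the geometric dichotomy itself: checking that the hypothesis $\alpha<\pi$ is exactly what guarantees both that $\ell$ genuinely separates $0$ from $K$ and that the foot of the perpendicular lands inside the segment $[z_a,z_b]$ (so the minimiser is $M\in K$, not an endpoint), and dually that $\alpha\geq\pi$ forces $0\in K$ — that is, pinning down precisely where the boundary case $\alpha=\pi$ belongs.
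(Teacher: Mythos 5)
Your argument is correct. Note that the paper itself gives no proof of this theorem (it is imported from \cite{Patra1}), so there is nothing in-text to compare against; but your reduction to $\delta=\min_{\norm{\psi}=1}|\inpr{\psi}{W}{\psi}|=\mathrm{dist}(0,\mathrm{conv}\{z_1,\dots,z_n\})$ via the numerical range of a normal operator, followed by the chord geometry giving $\cos(\alpha/2)$ in case (1) and the separation argument giving $0\in K$ when $\alpha\geq\pi$, is the natural proof and all steps check out (the ``edge of $K$'' remark is unnecessary, and the degenerate case $z_a=z_b$, $\alpha=0$ is handled trivially).
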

First note that, if we set $W=\hconj{U}V$ then $d(I,W)=d(U,V)$ in \eqref{eq:basic_formula} above. The theorem  has some simple but useful consequences for tensor product of operators. Making use of the fact that the eigenvalues of the operator $A\otimes B$ are product of eigenvalues of $A$ and $B$ and the theorem we infer the following. 
\begin{propn}
Let $U,V \in \mathcal{U}(\mathcal{H})$ and $W, X \in  \mathcal{U}(\mathcal{K})$. Let $d(U, V) = \delta_1$ and $ d(W, X) = \delta_2$. Let $\delta = d( U\otimes W, V\otimes X)$, where the metric $d$ is now defined on $\mathcal{U}(\mathcal{H}\otimes\mathcal{K})$. Then 
\[
\delta =  
\begin{cases}
  \delta_1\sqrt{1 - \delta_2^2} + \delta_2 \sqrt{ 1 - \delta_1^2} \text{ \sf{if} } \delta_1^2 + \delta_2^2 < 1 \\
 1 \text{ otherwise.}
\end{cases}
\]
\end{propn}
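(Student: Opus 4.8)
The plan is to reduce the computation to the eigenvalue characterization in Theorem~\ref{thm:eigenChar}. By the two-sided unitary invariance (part~3 of Theorem~\ref{thm:properties}) together with the identity $(U\otimes W)^{\dagger}(V\otimes X)=(U^{\dagger}V)\otimes(W^{\dagger}X)$, one has $\delta=d\big(I,\,P\otimes Q\big)$ with $P=U^{\dagger}V$ and $Q=W^{\dagger}X$, while $\delta_{1}=d(I,P)$ and $\delta_{2}=d(I,Q)$; so it suffices to express $d(I,P\otimes Q)$ through $d(I,P)$ and $d(I,Q)$. Let $\alpha_{1}$ (resp.\ $\alpha_{2}$) be the length of the minimal arc containing the eigenvalues of $P$ (resp.\ $Q$), so that each arc has a genuine eigenvalue at both of its endpoints. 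Since the eigenvalues of $P\otimes Q$ are the products $z_{i}w_{j}$ of eigenvalues of the factors, their arguments are the pairwise sums of the arguments of the $z_{i}$ and $w_{j}$; these sums lie in an arc of length $\alpha_{1}+\alpha_{2}$ whose endpoints are attained (take the products of the two extreme eigenvalues of each factor). I would then split into the two cases matching Theorem~\ref{thm:eigenChar}.

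\emph{Case $\alpha_{1}+\alpha_{2}<\pi$.} Here $\alpha_{1},\alpha_{2}<\pi$, so by \eqref{eq:basic_formula} $\delta_{i}=\sin(\alpha_{i}/2)$ with $\alpha_{i}/2\in[0,\pi/2)$, hence $\cos(\alpha_{i}/2)=\sqrt{1-\delta_{i}^{2}}$. Moreover $2\pi-(\alpha_{1}+\alpha_{2})>\alpha_{1}+\alpha_{2}$, so no gap interior to the arc of products (which has total length $\alpha_{1}+\alpha_{2}$) can exceed the complementary gap; thus the minimal arc of $P\otimes Q$ has length exactly $\alpha_{1}+\alpha_{2}<\pi$, and Theorem~\ref{thm:eigenChar} gives $\delta=\sin\!\big((\alpha_{1}+\alpha_{2})/2\big)$. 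The angle-addition formula then yields $\delta=\delta_{1}\sqrt{1-\delta_{2}^{2}}+\delta_{2}\sqrt{1-\delta_{1}^{2}}$. Finally, since $t\mapsto\sin t$ is increasing on $[0,\pi/2]$, one checks $\alpha_{1}+\alpha_{2}<\pi\iff\sin(\alpha_{1}/2)<\cos(\alpha_{2}/2)\iff\delta_{1}^{2}+\delta_{2}^{2}<1$, so this case is exactly the first branch of the claimed formula.

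\emph{Case $\alpha_{1}+\alpha_{2}\ge\pi$.} Here the goal is $\delta=1$, i.e.\ the minimal arc of $P\otimes Q$ is itself $\ge\pi$. If $\alpha_{1}\ge\pi$ (resp.\ $\alpha_{2}\ge\pi$) then $d(I,P)=1$ (resp.\ $d(I,Q)=1$), and $\delta=1$ follows at once: by part~4 of Theorem~\ref{thm:properties} there is a unit vector $\alpha$ with $P\alpha\perp\alpha$, whence $(\alpha\otimes\beta)\perp(P\otimes Q)(\alpha\otimes\beta)$ for any unit $\beta$. Otherwise $\alpha_{1},\alpha_{2}<\pi$; the eigenvalues of $P\otimes Q$ include the four ``corner'' products built from the extreme eigenvalues of $P$ and $Q$, whose arguments, relative to a common origin, are $0,\alpha_{1},\alpha_{2},\alpha_{1}+\alpha_{2}$. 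Each gap between cyclically consecutive corners is either one of $\alpha_{1},\alpha_{2},|\alpha_{1}-\alpha_{2}|$ --- all $<\pi$ --- or $2\pi-(\alpha_{1}+\alpha_{2})\le\pi$; hence the minimal arc enclosing the corners, and a fortiori that enclosing all eigenvalues of $P\otimes Q$, has length $\ge\pi$, so $\delta=1$ by Theorem~\ref{thm:eigenChar}. By the equivalence just established this case is precisely $\delta_{1}^{2}+\delta_{2}^{2}\ge1$, the second branch.

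The one genuinely delicate point is the lower bound on the minimal enclosing arc when $\alpha_{1}+\alpha_{2}\ge\pi$: the arc of products need not be minimal there --- for instance if $P$ or $Q$ has only two distinct eigenvalues a large internal gap can appear --- so one cannot simply read off $\alpha_{1}+\alpha_{2}$ as the arc length, and must argue via the corner points (equivalently, via part~4 of Theorem~\ref{thm:properties}, that $0$ lies in the convex hull of the eigenvalues of $P\otimes Q$). The remaining ingredients --- the reduction by invariance, the product structure of the spectrum of $P\otimes Q$, and the trigonometric identity --- are routine.
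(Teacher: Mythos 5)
Your proposal is correct and follows exactly the route the paper indicates for this proposition (reduce to $d(I,P\otimes Q)$ by unitary invariance, use the fact that the spectrum of $P\otimes Q$ consists of products of eigenvalues, and apply Theorem~\ref{thm:eigenChar}); the paper states this in one sentence and omits the details. Your treatment of the case $\alpha_1+\alpha_2\ge\pi$ via the corner products correctly handles the point where the naive ``arc of length $\alpha_1+\alpha_2$'' argument would break down, which the paper does not spell out.
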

The metrics on the operators defined here are induced by metrics on the space in which they act. In the case where the later are induced by norms we have essentially distances defined by the {\em operator norms}. This is the case for the $L_1$ or trace norm induced metric defined here. For some calculations, it may be better to choose other distances, for example Bures metric \cite{Bengtsson}. 
\subsection{A general class of metrics}\label{sec:gen_def_metric}\label{subsec:gen1}
In the definition of metric on $\mathcal{U}(\mathcal{H})$ the supremum is taken over {\em all} states in $S(\mathcal{H})$  (see \eqref{eq:basic_metric1}). One may, however, choose to restrict the set over which maximization is done to a proper subset of states. This may be considered as a restriction on the input resources. Let 
\beq\label{eq:gen_metric1}
d_K (U, V)  = \max_{\rho \in K} d_\rho(U, V), \; K \subset S(\mathcal{H}) 
\eeq
Here it is assumed that $K$ is a closed subset of $S(\mathcal{H})$. 
First observe that, $d_\rho(U, V)$ is a special case corresponding to $K = \{\rho\}$. The conditions of symmetry and triangle inequality for a metric are satisfied by $d_K$. Positivity may be ensured if, for example, $K$ spans the real vector space of hermitian operators over $\mathcal{H}$. An example of such a subset $K$ is the set of pure states (1-dim projection operators). Then an arbitrary state  $\sigma = \sum_i p_i \pj{\alpha_i}$ and 
\[
\begin{split}
2d_\sigma(U, V) & =\norm{U\cdot \sigma - V\cdot \sigma}_1 \leq \sum_i p_i \norm{U\pj{\alpha_i}\hconj{U} - V\pj{\alpha_i}\hconj{V}}_1 \\
& =2 \sum p_i d_{\alpha_i} (U, V) = 0\text{ if } \sup_{\alpha} d_\alpha(U,V) = 0\\
\end{split}
\]
Note that $K \subset L \subset S(\mathcal{H})$ implies $d_K (U, V) \leq d_L(U, V)$ for all $U, V \in  \mathcal{U}(\mathcal{H})$. This generalization of the metric is necessary for the definition of entangling capacities. 

\section{Entangling capacity of operators}
The metrics on operators discussed in the previous section may be used to define quantitative measures for the {\em entangling  capacity} of operators. Quantum entanglement is associated with the states of composite systems. These states give rise to correlations which cannot be generated by ``classical'' operations (LOCC). Therefore, we have to distinguish the states generated by certain ``quantum operations'' from those that can be created by classical operations. Consider an ideal  noiseless scenario. To create an entangled state from 2 subsystems we need an interaction Hamiltonian. The joint system then evolves under a unitary operator on the product space.  So it seems reasonable to ask how much entanglement can an interaction create? In other words, what is the {\em entangling capacity} of the interaction operator? Instead of dealing with an interaction Hamiltonian $H$, we could work with the evolution operator $U(\Delta t) = e^{iH\Delta t/\hslash}$  (with time interval $\Delta t$)\footnote{Time independent Hamiltonian assumed here.}. Starting from an initial product state, $\ket{\phi}\otimes \ket{\psi}$ (for simplicity, taken to be  pure) how much entanglement is there on $U(\ket{\phi}\otimes \ket{\psi})$ where $U$ is a unitary operator acting on the composite system? Here we have assumed ideal conditions: the exact Hamiltonian is known, there is no noise etc. In general the quantum operation is represented by a CP map. The definition of entangling capacity can be extended to CP maps. 

\subsection{Entangling capacity of unitary operators}
Let $\mathcal{H}$ and $\mathcal{K}$ be Hilbert spaces of dimensions $m$ and $n$ respectively and let $d_\pi$ be a metric on $\mathcal{U}(\mathcal{H}\otimes \mathcal{K})$ induced by a metric $d$ in  $B(\mathcal{H}\otimes \mathcal{K})$: 
\beq\label{eq:prod_met1}
d_\pi(U, V) = \max_{\rho \in S_{sep}(\mathcal{H}\otimes \mathcal{K})} d_\rho (U, V) = \max_{\rho \in S_{sep}(\mathcal{H}\otimes \mathcal{K})} d(U\rho \hconj{U}, V\rho \hconj{V})
\eeq
That is, we apply the unitary operators to {\em separable} states to compute the distance between them. The reason is, we want to study the {\em entangling capacity} of a unitary operator--- the amount of entanglement it can produce starting from a ``classical'' state of the composite system (created by LOCC). Now we can define the entangling capacity (or simply capacity) $C(U)$ of an operator  $U \in \mathcal{U}(\mathcal{H}\otimes \mathcal{K})$. For convenience of notation the underlying spaces $\mathcal{H}, \mathcal{K}$ will be dropped when they are clear from the context. Thus write $\Pi_u$ instead of $\Pi_u(\mathcal{H}\otimes \mathcal{K})$ and $S_{sep}$ instead of $S_{sep}(\mathcal{H}\otimes \mathcal{K})$. 

\beq \label{eq:entcap1} 
\begin{split}
C(U) & = \inf_{V\in \Pi_u} d_\pi(U, V) \\
& = \inf_{V\in \Pi_u} \sup_{\rho \in S_{sep}} \tt{Tr}(|U\rho \hconj{U} - V\rho \hconj{V})|)/2\\
& =  \min_{V\in \Pi_u} \max_{\rho \in S_{sep}} \tt{Tr}(|U\rho \hconj{U} - V\rho \hconj{V})|)/2\\
\end{split}
\eeq
We could replace {\em inf} and {\em sup} by {\em min} and {\em max} respectively, since we are dealing with compact sets. Moreover, the maximum will occur at some \emph{pure product} state (see the discussion after \eqref{def:psi-dist}). Hence, for the trace ($L_1$) norm 
\beq\label{eq:entcap2}
C(U) =  \min_{V\in \Pi_u} \max_{\psi \in \Pi_p} 
\sqrt{1 - \modulus{\inpr{\psi}{V^\dagger U}{\psi}}^2} 
\eeq
The following theorem gives some of the basic properties of entangling capacity. 
\begin{thm}
Let $U, V$ be unitary operators on $\mathcal{H}\otimes \mathcal{K}$. Then the following statements are true. 
\be[{\em a.}]
\item
$C(U)$ is invariant under multiplication (left {\em and right}) by unitary operators  $X\in \Pi_u$ (local operations). That is, 
\beq 
%\label{eq:inv2}
C(U) = C(XU) = C(UX)
\eeq
\item
$C(U) = C(\hconj{U}) $
\item
$C(UV) \leq C(U) + C(V)$. 
%check the first
\item
$C(U)$ is a continuous function on the metric space $(\mathcal{U}, d_\pi)$. 
\ee
\end{thm}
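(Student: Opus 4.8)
The plan is to obtain all four statements directly from the definition $C(U)=\inf_{V\in\Pi_u}d_\pi(U,V)$ of \eqref{eq:entcap1}, using three ingredients: the metric axioms and the left/right unitary invariance of the underlying metric $d$ (Theorem~\ref{thm:properties}); the fact that $\Pi_u$ is a group; and the locality facts that every $X\in\Pi_u$ carries $S_{sep}$ onto $S_{sep}$ and permutes the pure product states $\Pi_p$. The first thing I would record is that these facts propagate the invariance of $d$ to $d_\pi$: from $d_\rho(XE,XF)=d_\rho(E,F)$ one gets, maximising over $\rho\in S_{sep}$, that $d_\pi$ is left-invariant under all of $\mathcal U$; and for $X\in\Pi_u$ the substitution $\rho\mapsto X\acti\rho$ permutes $S_{sep}$, so $d_\pi(UX,VX)=\max_{\rho\in S_{sep}}d_{X\acti\rho}(U,V)=d_\pi(U,V)$, i.e.\ $d_\pi$ is right-invariant under $\Pi_u$.

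Parts (a) and (d) then follow quickly. For (a), reparametrise $V\mapsto XV$ (a bijection of the group $\Pi_u$) in $C(XU)=\inf_{V\in\Pi_u}d_\pi(XU,V)$ to get $\inf_V d_\pi(XU,XV)=\inf_V d_\pi(U,V)=C(U)$, and likewise $V\mapsto VX$ for $C(UX)$. For (d), note that $d_\pi=\max_{\rho\in S_{sep}}d_\rho$ is a supremum of pseudometrics and hence satisfies the triangle inequality, so for any $V\in\Pi_u$ and any $U'$ we have $d_\pi(U,V)\le d_\pi(U,U')+d_\pi(U',V)$; taking the infimum over $V$ gives $C(U)\le d_\pi(U,U')+C(U')$, and symmetrising yields $\modulus{C(U)-C(U')}\le d_\pi(U,U')$, so $C$ is $1$-Lipschitz, in particular continuous, on $(\mathcal U,d_\pi)$.

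For (c), fix $\varepsilon>0$ and pick $W_1,W_2\in\Pi_u$ with $d_\pi(U,W_1)\le C(U)+\varepsilon$ and $d_\pi(V,W_2)\le C(V)+\varepsilon$; then $W_1W_2\in\Pi_u$, so $C(UV)\le d_\pi(UV,W_1W_2)$. For a separable $\rho$, insert the intermediate state $UW_2\acti\rho$ and use the triangle inequality and left invariance of $d$:
\[
d_\rho(UV,W_1W_2)\le d\bigl(UV\acti\rho,\,UW_2\acti\rho\bigr)+d\bigl(UW_2\acti\rho,\,W_1W_2\acti\rho\bigr)=d_\rho(V,W_2)+d_{W_2\acti\rho}(U,W_1).
\]
The first term is $\le d_\pi(V,W_2)$ because $\rho\in S_{sep}$; and, crucially, the second is $\le d_\pi(U,W_1)$ because $W_2\in\Pi_u$ forces $W_2\acti\rho\in S_{sep}$, so that state is already among those over which $d_\pi(U,W_1)$ maximises. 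Taking the maximum over $\rho\in S_{sep}$ gives $d_\pi(UV,W_1W_2)\le d_\pi(U,W_1)+d_\pi(V,W_2)\le C(U)+C(V)+2\varepsilon$, and $\varepsilon\downarrow0$ finishes. I expect this to be the main obstacle: the naive intermediate state $V\acti\rho$ need not be separable since $V$ may itself be entangling, so the factorisation has to be arranged so that the factor $W_1$ which stays near $U$ acts on the image $W_2\acti\rho$ of a separable state under a \emph{local} unitary --- only then does the estimate close in terms of $d_\pi$ rather than the larger metric $d$.

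For (b), I would prove $d_\pi(\hconj U,V)=d_\pi(U,\hconj V)$ for every $V\in\Pi_u$; then $C(\hconj U)=\inf_{V\in\Pi_u}d_\pi(\hconj U,V)=\inf_{V\in\Pi_u}d_\pi(U,\hconj V)=\inf_{W\in\Pi_u}d_\pi(U,W)=C(U)$, reparametrising $W=\hconj V$. To prove the identity, use the pure-state formula \eqref{def:psi-dist} (the maximum being attained on $\Pi_p$ as in \eqref{eq:entcap2}): $d_\pi(\hconj U,V)=\max_{\psi\in\Pi_p}\sqrt{1-\modulus{\inpr{\psi}{UV}{\psi}}^2}$, while $d_\pi(U,\hconj V)=\max_{\psi\in\Pi_p}\sqrt{1-\modulus{\inpr{\psi}{(VU)^{\dagger}}{\psi}}^2}=\max_{\psi\in\Pi_p}\sqrt{1-\modulus{\inpr{\psi}{VU}{\psi}}^2}$, using $\modulus{\inpr{\psi}{M^{\dagger}}{\psi}}=\modulus{\inpr{\psi}{M}{\psi}}$. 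Since $VU=V(UV)V^{-1}$ is a conjugate of $UV$ by $V\in\Pi_u$ and conjugation by an element of $\Pi_u$ permutes $\Pi_p$, we get $\max_{\psi\in\Pi_p}\modulus{\inpr{\psi}{VU}{\psi}}=\max_{\psi\in\Pi_p}\modulus{\inpr{\psi}{UV}{\psi}}$, which gives the identity. (One could instead invoke $d(I,W)=d(I,\hconj W)$ from Theorem~\ref{thm:eigenChar}, but \eqref{def:psi-dist} is the most direct route for the restricted metric $d_\pi$.)
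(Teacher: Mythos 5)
Your proposal is correct and follows essentially the same route as the paper: every part is reduced to the left-invariance of $d_\pi$ under arbitrary unitaries, its right-invariance under $\Pi_u$ (because local unitaries preserve $S_{sep}$), and the triangle inequality, with the same intermediate point ($UW_2$, the paper's $UY$) doing the work in part (c). The only cosmetic difference is in (b), where you establish the pointwise identity $d_\pi(\hconj{U},V)=d_\pi(U,\hconj{V})$ via the pure-state overlap formula and a conjugation by $V\in\Pi_u$, whereas the paper applies the chain $d_\pi(\hconj{X},\hconj{U})=d_\pi(I,\hconj{U}X)=d_\pi(U,X)$ to an attained minimizer $X$; both rest on the same invariances.
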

\begin{proof}
\be[{\em a.}]
\item
The invariance of $C(U)$ under multiplication by local unitary operator $X$ follows from the fact that $d_\pi(U, V) = d_\pi(XU, XV) = d_\pi (UX, VX)$. The second equality is a consequence of the fact that $X = X_1\otimes X_2$ is a local unitary operation. So  
\[
\begin{split}
 & d_\pi (UX, VX)  \\
& = \max_{{\alpha\in \mathcal{H}},{\beta\in \mathcal{K}}} \left(1 -{\left|\bra{\alpha}\bra{\beta}\hconj{X_1}\otimes\hconj{X_2}\hconj{U}VX_1\otimes X_2
\ket{\alpha}\ket{\beta}\right|}^2\right)^{1/2}\\ 
& = d_\pi(U, V) \\
\end{split}
\] 
\item
Let $X \in \Pi_u$ be a  local unitary such that $C(U) = d_\pi(X, U)$. Then, 
\[ 
\begin{split}
C(\hconj{U}) &  \leq d_\pi(\hconj{X}, \hconj{U}) = d_\pi(\hconj{X}X, \hconj{U}X) \\ 
& = d_\pi (I, \hconj{U}X) = d_\pi(U, X) = C(U)\\
\end{split}
\]
Interchanging $U$ and $\hconj{U}$ above one gets $C(U) = C(\hconj{U})$. 
\item
Let $X, Y \in \Pi_u$ such that $C(U) = d_\pi(X, U)$ and $C(V) = d_\pi(Y, V)$. Then
\[
\begin{split}
C(UV) & \leq d_\pi(XY, UV) \leq d_\pi(XY, UY) + d_\pi(UV, UY) \\
& = d_\pi(X, U) + d_\pi(V, Y) = C(U) + C(V) \\
\end{split}
\]
The second step follows from the fact that $d_\pi$ is invariant under left multiplication by {\em any} unitary operator on $\mathcal{H}\otimes \mathcal{K}$. 
\item
Since the minimum value $C(U)$ is attained, we have $C(U) = d_\pi(V_1\otimes V_2, U)$ for some $V_1\otimes V_2 \in \Pi_u$. Then for any unitary $U' \in \mathcal{U}(\mathcal{H}\otimes \mathcal{K})$
\[
\begin{split}
C(U') & \leq d_\pi(V_1\otimes V_2, U') \\
& \leq d_\pi(V_1\otimes V_2, U) + d_\pi(U,U') = C(U) + d_\pi(U, U') 
\end{split}
\]
Interchanging the role of $U$ and $U'$ we get $|C(U) - C(U')| \leq d_\pi(U, U')$ proving continuity. 
\ee
\end{proof}
\noindent
Computing the value of $C(U)$ from the formulas \eqref{eq:entcap1} or \eqref{eq:entcap2} is rather difficult. So we try to estimate it indirectly. 
\subsection{General class of capacities} 
In Subsection \ref{subsec:gen1} the definition of metrics was modified to accommodate a general class of metrics. The change was effected by restricting the input set over which we maximize. We can extend the same idea to the \emph{output} side. In the definition \eqref{eq:entcap1} the minimization is to be done over $\Pi_u$, the group of product unitary operators and maximization over $S_{sep}$, the set of seprable states. We could generalize both. A restriction on the states would amount to a restriction on the input resources. Similarly a different set of operator from which we measure the distance is reflection on the output resources. So, let $\mathcal{O}$ be a closed subset $CP(\mathcal{H}\otimes \mathcal{K})$ (completely positive operators) and $\mathcal{G}$ a closed subset of $S(\mathcal{H}\otimes\mathcal{K})$. Define 
\beq \label{eq:entcapG} 
\begin{split}
C_{\mathcal{O},\mathcal{G}}(U) & = \inf_{L\in \mathcal{O}} d_\mathcal{G}(U, L) \\
& =  \min_{L\in \mathcal{O}} \max_{\rho \in \mathcal{G}} \tt{Tr}(|U\rho \hconj{U} - L\rho \hconj{L})|)/2\\
\end{split}
\eeq
One could take the set $\mathcal{O}$ to be closure of all LOCC operations. Alternatively, we could restrict $\mathcal{O}$ to be some (closed) subgroup of $\Pi_u$. For example, the subgroup of product unitary operators acting on a subspace $\mathcal{H}'\otimes\mathcal{K}'$, where $\mathcal{H}'$ (resp. $\mathcal{K}'$) is a subspace of $\mathcal{H}$ ($\mathcal{K}$). This would be appropriate if ancilla are used in the entanglement preparation stage. However, in this paper only the noiseless case without use of ancilla is studied. 
\subsection{The dual problem}
The formula \eqref{eq:entcap1} (or  \eqref{eq:entcap2}) for the entangling capacity of a unitary operator on a product space is an instance of {\em minimax} problem (\cite{Rockafellar}, \cite{Du}). The term ``minimax'' encodes the instruction: maximise and \emph{then} minimise.  A minimax problem has an associated dual problem--- a \emph{maximin} problem in which the 2 operations are reversed. We arrived at the expressions for $C(U)$ from the definition of entangling capacity. It is entirely reasonable to get  at an alternative definition. Thus, let $U$ be a unitary operator on $\mathcal{H}\otimes \mathcal{K}$ as before. Now  consider the image $\mathcal{Z} = U(\Pi_p(\mathcal{H}\otimes \mathcal{K}))$, that is, the set of pure states generated when $U$ is applied to the set of pure \emph{product} states. Some states in  $\mathcal{Z}$ will be entangled unless $U\in \Pi_u( \mathcal{H}\otimes  \mathcal{K})$. Suppose $\rho \rightarrow E(\rho)$ is an entanglement measure on states. We ask the question: what is the maximum $E(\rho), \;\rho \in \mathcal{Z}$. In other words what is the maximum entanglement, measured by $E$, of a state that can be generated by $U$ applied to pure product states? This leads to  an alternative  entanglement capacity 
\beq\label{eq:entcap_dual1}
C_E(U) = \max_{\alpha \in U(\Pi_p(\mathcal{H}\otimes \mathcal{K}))}E(\alpha) 
\eeq
Since  a geometric approach is followed in this paper, I choose $E$ to be a {\em geometric} entanglement measure \cite{Vedral, Chen}. In particular, let 
\[
E(\alpha) = \min_{\rho \in S_{sep}} \norm{\rho - \pj{\alpha}}_1/2 =  \min_{\rho \in S_{sep}} \tt{Tr}(|\rho - \pj{\alpha}|)/2
\]
$E$ is a (weak) entanglement measure on states \cite{Chen}. We also restrict $\rho$ to pure product states.\footnote{I cannot completely justify this. Can we assert that there is a pure state in $S_{sep}$ closest to a pure (entangled) state?} Then the equation \eqref{eq:entcap_dual1} maybe written as 
\beq\label{eq:entcap_dual2}
\begin{split}
C_E(U) &= \max_{\ket{\psi}\in \Pi_p} \min_{\ket{\phi} \in \Pi_p} {\tt Tr}(|\pj{\phi}- U\pj{\psi}\hconj{U})/2 \\
& = \max_{\ket{\psi}\in \Pi_p} \min_{\ket{\phi}\in \Pi_p} 
\sqrt{1 - | \bra{\phi}U\ket{\psi}|^2} \\
& = \max_{\ket{\psi}\in \Pi_p} \min_{V\in \Pi_u}  %V_1\otimes V_2 
\sqrt{1 - | \bra{\psi}\hconj{V}U\ket{\psi}|^2} \\
\end{split}
\eeq
The last step is justified by the transitive action of the group $\Pi_u = \mathcal{U}(\mathcal{H})\otimes \mathcal{U}(\mathcal{K})$ on product states. This means that for any pair pure product states, $\phi\text { and } \psi$, there is a product unitary operator $V$ such that $\phi = U\psi$. Note that in the formulas  \eqref{eq:entcap2} and \eqref{eq:entcap_dual2} the order of minimization and maximization is changed---the problems are \emph{dual} to each other. The first is a \emph{minimax} (maximization followed by minimization) and the  second a \emph{maximin} problem \cite{Rockafellar}
For a general formulation, Let $A\subset X \text{ and }B\subset Y$ be compact subsets  of finite-dimensional vector spaces $X \text{ and } Y$. Let $F:X\times Y \rightarrow \Real$ be a continuous function. Then a minimax problem requires to find the number 
\beq\label{eq:minimax}
M = \min_{x\in A}\max_{y\in B} F(x, y) 
\eeq
The dual of this problem ({\em maximin}) is to compute 
\beq\label{eq:maximin}
m = \max_{y\in B}\min_{x\in A} F(x, y) 
\eeq
A simple but important relation is the {\em minimax inequality}: $m\leq M$ \cite{Rockafellar, Du}. A major part of minimax theory is devoted to finding conditions under which the inequality becomes an equality. I will discuss this important issue later. In any case, it follows that 
\beq\label{eq:minimax1}
C_E(U) \leq C(U) \text{ for all } U\in \mathcal{U}(\mathcal{H}\otimes\mathcal{K})
\eeq

\subsection{Computing entangling capacities} 
Computing $C(U)$ directly from the definitions appears to be a difficult problem. So we approach it indirectly via the dual $C_E(U)$. Writing $\ket{\alpha}\ket{\beta}$ for $\alpha\otimes\beta$ define 
\beq\label{eq:fid}
\begin{split}
& G(\alpha, \beta, U_1, U_2; U) = |\bra{\alpha}\bra{\beta}\hconj{U_1}\otimes \hconj{U_2}U\ket{\alpha}\ket{\beta}|,\\
& \alpha\in \mathcal{H}, \beta \in \mathcal{K}, U_1\in \mathcal{U}(\mathcal{H})\text{ and } U_2\in \mathcal{U}(\mathcal{K})\\
\end{split}
\eeq
Then 
\begin{subequations}\label{eq:entcap3} 
\begin{align}
C(U) & = \sqrt{1 - (\max_{U_1, U_2}\min_{\alpha, \beta}G(\alpha, \beta, U_1, U_2; U))^2}\\
C_E(U) & = \sqrt{1 - (\min_{\alpha, \beta}\max_{U_1, U_2}G(\alpha, \beta, U_1, U_2; U))^2}
\end{align}
\end{subequations}
The vector $U(\alpha\otimes\beta)$ is in general an entangled state and one can find bases $\{\alpha_i\}, \{\beta_j\}$  in $\mathcal{H}\text{ and }\mathcal{K}$ respectively such that $ U(\alpha\otimes\beta) = \sum_{i=1}^r s_i \alpha_i\otimes \beta_i$, the number of nonzero terms being less than equal to smaller of the dimension (Schmidt decomposition). Further, the bases can be defined so that $s_i\geq 0$ \cite{Nielsen}. The following lemma exploits the Schmidt decomposition for computing $C_E(U)$. 
\begin{lem}
For a given product vector $\alpha\otimes\beta\in \mathcal{H}\otimes\mathcal{K}$ let 
\[ U(\alpha\otimes\beta) = \sum_{i=1}^{m} s_i \alpha_i\otimes \beta_i,\; s_i \geq 0, \sum_i s_i^2 = 1
\]
be the Schmidt decomposition. Let $\mathrm{dim}(\mathcal{H}) = m$ and $\mathrm{dim}(\mathcal{K}) = n$ and assume (without loss of generality)  $m\leq n$. Also suppose the coefficients are arranged in decreasing order $s_1 \geq s_2 \geq \dotsb s_m \geq 0$ then 
\[
\max_{U_1, U_2}G(\alpha, \beta, U_1, U_2; U) \leq s_1^2
\]
\end{lem}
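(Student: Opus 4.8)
\section*{Proof proposal}

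The plan is to eliminate the optimization over $U_1,U_2$ and recast $\max_{U_1,U_2}G$ as a purely geometric question about how well the fixed entangled image $\ket{\phi}=U(\alpha\otimes\beta)=\sum_{i=1}^m s_i\,\alpha_i\otimes\beta_i$ can be overlapped by a product vector. First I would slide the local unitaries onto the bra, writing
\[
G(\alpha,\beta,U_1,U_2;U)=\bigl|\inpop{(U_1\otimes U_2)(\alpha\otimes\beta)}{\phi}\bigr|.
\]
Since $\mathcal{U}(\mathcal{H})$ acts transitively on the unit sphere of $\mathcal{H}$ (and likewise $\mathcal{U}(\mathcal{K})$ on that of $\mathcal{K}$), as $U_1,U_2$ range over all local unitaries the vectors $U_1\alpha$ and $U_2\beta$ sweep out all unit vectors $\gamma\in\mathcal{H}$, $\delta\in\mathcal{K}$ independently. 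Hence the two-sided unitary optimization collapses to
\[
\max_{U_1,U_2}G=\max_{\norm{\gamma}=\norm{\delta}=1}\bigl|\inpop{\gamma\otimes\delta}{\phi}\bigr|,
\]
the maximal product-state overlap of $\phi$, which by \eqref{eq:entcap3} is exactly the object that the capacity formula squares.

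Next I would evaluate this extremal overlap through the Schmidt data, which is where the value $s_1^2$ appears. Fixing $\delta$ and optimizing over $\gamma$ first, the best $\gamma$ aligns with the partial contraction $\ket{\phi_\delta}:=(I\otimes\bra{\delta})\ket{\phi}=\sum_i s_i\inp{\delta}{\beta_i}\,\ket{\alpha_i}\in\mathcal{H}$, so that
\[
\max_{\gamma}\bigl|\inpop{\gamma\otimes\delta}{\phi}\bigr|^2=\norm{\phi_\delta}^2=\sum_{i=1}^m s_i^2\,\bigl|\inp{\delta}{\beta_i}\bigr|^2=\inpr{\delta}{\sigma}{\delta},
\]
where $\sigma=\sum_i s_i^2\pj{\beta_i}$ is the nontrivial part of the reduced state of $\pj{\phi}$. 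Because $\{\beta_i\}_{i=1}^m$ is orthonormal, the weights $|\inp{\delta}{\beta_i}|^2$ sum to at most $1$, so $\inpr{\delta}{\sigma}{\delta}$ is a sub-convex combination of the $s_i^2$, and maximizing over $\delta$ returns the top eigenvalue of $\sigma$, namely $s_1^2$ under the ordering $s_1\ge\dotsb\ge s_m$. Since $G$ is a nonnegative amplitude, squaring commutes with the maximization, so $(\max_{U_1,U_2}G)^2$ equals this maximal product-state fidelity $s_1^2=\lambda_{\max}(\rho_{\mathcal{H}})$, with $\rho_{\mathcal{H}}=\tr_{\mathcal{K}}(\pj{\phi})$ — the bound $s_1^2$ asserted in the lemma and the quantity that \eqref{eq:entcap3} feeds into $C_E(U)$.

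The hypothesis $m\le n$ enters precisely to guarantee that $\{\beta_i\}_{i=1}^m$ is an orthonormal (not overcomplete) family in $\mathcal{K}$, so that $\sum_i|\inp{\delta}{\beta_i}|^2\le 1$ and the weight-concentration step is legitimate; the ordering hypothesis then isolates $s_1^2$ as the extremal weight, vanishing $s_i$ only lowering the sum, and equality holding at $\delta=\beta_1,\gamma=\alpha_1$. The step I expect to be the main obstacle is making the reduction of the first paragraph fully rigorous — namely, justifying that invoking transitivity genuinely lets $U_1\alpha$ and $U_2\beta$ range \emph{independently} over the two unit spheres, and that the successive $\gamma$- and $\delta$-maxima interchange correctly with the squaring. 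Once these are secured, the identification of the optimum with $\lambda_{\max}(\rho_{\mathcal{H}})=s_1^2$ is immediate, and it is this single scalar, rather than a more elaborate function of the full Schmidt spectrum, that is the correct input to the dual capacity $C_E(U)$.
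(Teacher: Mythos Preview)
Your argument is correct and, like the paper's, actually establishes $\max_{U_1,U_2}G=s_1$ (equivalently $G^2\le s_1^2$, with equality at $\gamma=\alpha_1$, $\delta=\beta_1$); the exponent on the right-hand side of the lemma as stated is a slip in the paper, as its own proof and the remark immediately following it make clear. The ``obstacles'' you flag are not real: transitivity of $\mathcal U(\mathcal H)$ on the unit sphere is exactly what lets $U_1\alpha$ and $U_2\beta$ range independently, and squaring is monotone on $[0,1]$, so it commutes with the maximum.

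The route, however, differs from the paper's. The paper's proof is a three-line algebraic estimate: expand $U_1\alpha=\sum_i a_i\alpha_i$ and $U_2\beta=\sum_j b_j\beta_j$ in the Schmidt bases, read off $G=\bigl|\sum_i \bar a_i\bar b_i s_i\bigr|$, and chain triangle inequality, the bound $s_i\le s_1$, and Cauchy--Schwarz on $\sum_i|a_ib_i|$. Your approach is more structural: you first use transitivity to identify $\max_{U_1,U_2}G$ with the maximal product-state overlap of $\phi=U(\alpha\otimes\beta)$, and then evaluate that overlap by a two-stage optimization that lands on $\lambda_{\max}(\rho_{\mathcal H})$. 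Both rest on the same Schmidt data; the paper's version buys brevity, while yours buys the explicit identification of the optimized $G$ with the geometric entanglement of $\phi$ and with the spectrum of the reduced state, which makes the passage to \eqref{eq:entcap3} and \eqref{eq:entCap0} more transparent.
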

\begin{proof}
Let $U_1\alpha= \sum_i a_i\alpha_i$ and $U_2\beta= \sum_i b_i\beta_i$. Then 
\begin{align*}
&G(\alpha, \beta, U_1, U_2; U)^2  = |\sum\conj{a}_i\conj{b}_i s_i|^2 \\
& \leq (\sum |a_ib_i|s_i )^2\leq s_1^2 (\sum |a_ib_i|)^2 \\
& \leq s_1^2\sum |a_i|^2 \cdot \sum |b_i|^2 = s_1^2
\end{align*}
The last inequality follows from the fact that $\alpha$, $\beta$ are unit vectors and the preceding one is a consequence of Cauchy-Schwartz inequality. 
\end{proof}
\noindent
If we choose $U_1, U_2$ such that $U_1\alpha = \alpha_1$ and $U_2\beta = \beta_1$ then $G(\alpha, \beta, U_1, U_2; U) = s_1^2$. 
It follows that $c \equiv  \min_{\alpha, \beta}\max_{U_1, U_2}G(\alpha, \beta, U_1, U_2; U)$ is the minimum value of the largest Schmidt coefficient of  $U(\alpha\otimes\beta)$ as we vary $\alpha, \beta$. Denote this minimum value by $s_{\mu}(U)$. Since $\sum_{i=1}^{m} s_i^2 = 1$, $s_1^2 \geq \frac{1}{m}$. We get the following:  
\beq \label{eq:entCap0}
C_E(U) = \sqrt{1 - s^2_{\mu}(U)}\leq \sqrt{1 -\frac{1}{m}}
\eeq
Note that the upper bound for $C_E(U)$ is achieved by the {\em maximally-entangled} states. If the maximum (Scmidt) rank of the the states $U(\alpha\otimes\beta)$ is $r$, as one varies $\alpha$ and $\beta$ then the above inequality can be improved.  
\begin{equation}\label{ineq:ent_state2}
C_E(U) \leq \sqrt{1 -\frac{1}{r}}
\end{equation}
It is considerably more difficult to give good estimates of $C(U)$ because it requires estimating non-trivial lower bounds. But it follows from the minimax inequality \eqref{eq:minimax1} that if a unitary operator $U$ generates a maximally entangled state ($C_E(U) = \sqrt{1-1/m}$) then 
\beq\label{ineq:cap_lowerB1}
 \sqrt{1-\frac{1}{m}} \leq C(U)
\eeq
\section{Generalised control unitary operators}
Control operators generate entangled states when applied to appropriate product states. The most familiar 2-qubit control gates like the CNOT ($C_X$) and controlled-$Z$ ($C_Z$) are basic to building quantum circuits \cite{Nielsen}. The operators  $C_X$ and $C_Z$ are defined in the ``computational'' basis as 
\[
\begin{split}
C_X\ket{0}\ket{\alpha}  & = \ket{0}\ket{\alpha}\text{ and } C_X\ket{1}\ket{\alpha}  = \ket{1}X\ket{\alpha}\\
C_Z\ket{0}\ket{\alpha}  & = \ket{0}\ket{\alpha}\text{ and } C_Z\ket{1}\ket{\alpha}  = \ket{1}Z\ket{\alpha}\\
\end{split}
\]
The corresponding matrices in this basis are 
\[
\begin{aligned}
& C_X  = \begin{pmatrix} I_2 & 0 \\ 0 & X^{(2)} 
\end{pmatrix}
& C_Z  = \begin{pmatrix} I_2 & 0 \\ 0 & Z^{(2)} 
\end{pmatrix}\\
& X^{(2)}  = \begin{pmatrix} 0 & 1\\ 1 & 0 \end{pmatrix} 
& Z^{(2)}  = \begin{pmatrix} 1 & 0\\ 0 & -1 \end{pmatrix} 
\end{aligned}
\]
Here, somewhat fussy notation for the 2-dimensional $X\text{ and } Z$ operators are used because I will be discussing these operators in arbitrary dimension. In the following subsection I propose a generalization of control unitary operators \footnote{The presentation that follows is new, to the best of my knowledge.}.

\subsection{Definitions and some properties of GCO} 
Control operators like $C_Z, C_X$ in 2-qubit space have the characteristic that one qubit, the control bit, is unchanged and depending on its state a unitary operator is applied to the other. 
\begin{Def*}
Let $\mathcal{H}$ and $\mathcal{K}$ be Hilbert spaces of dimension $m$ and $n$ respectively ($m\leq n$)\footnote{No loss of generality here!}. Let $\mathscr{A} = \{\alpha_1, \dotsc, \alpha_m \}$  be an orthonormal basis in  $\mathcal{H}$.  Suppose  $\mathscr{U}_f = \{U_1, \dotsc, U_m\}$ is a family of   unitary operators (not necessarily distinct) on $\mathcal{K}$.
%%%%%%
\commentout{
\beq\label{eq:unitary_family}
%\mathscr{U}_f = \{U_1, \dotsc, U_m|\; U_iU_j = U_jU_i\}
\eeq
}
The {\em rank } of a unitary family $\mathscr{U}_f$ is defined as 
\beq\label{eq:uf_rank}
\mathrm{rank}(\mathscr{U}_f) = \max_{\beta\in \mathcal{K}}\{ k | \; k = \text{\em rank}(U_1\beta\; U_2\beta\; \dotsc  U_m\beta)\}
\eeq
In this equation $(U_1\beta\; U_2\beta\; \dotsc  U_m\beta)$ denotes the matrix with columns $U_i\beta$. The family $\mathscr{U}_f$ is called abelian if for all operators in $\mathscr{U}_f$ mutually commute. 
Define a  operator $U$ on $\mathcal{H}\otimes\mathcal{K}$ associated with a unitary family $U_f$ as follows. 
\beq\label{eq:gen_control1}
U(\alpha_i\otimes \beta) = \alpha_i\otimes U_i\beta
\eeq
Then $U$ is called a generalized control operator ({\bf GCO}) over unitary family $\mathscr{U}_f$ and  basis $\mathscr{A}$ . 
\end{Def*}

\noindent
The rank of a unitary family $\mathscr{U}_f$ is the maximal dimension of a subspace generated by $\{U_i\beta| \; U_i\in \mathscr{U}_f, \beta\in \mathcal{K}\}$. The operator $U$ defined by  \eqref{eq:gen_control1} is essentially unique except perhaps a permutation of $\mathscr{A}$ (or $\mathscr{U}_f$) . Some properties of GCO are proved in the following theorem. 
\begin{thm}\label{thm:gco1}
Given Hilbert spaces $\mathcal{H}$ and $\mathcal{K}$ of dimension $m$ and $n$ respectively ($m\leq n$) 
let $U$ be the operator on $\mathcal{H}\otimes\mathcal{K}$ defined over a unitary family $\mathscr{U}_f = \{U_1, \dotsc, U_m\}$ and  a basis $\mathscr{A}= \{\alpha_1, \dotsc, \alpha_m\}$ of $\mathcal{H}$.  
Let $r = \mathrm{rank}\mathscr{U}_f$ . Then the following assertions hold. 
%%% $\mathscr{B}= \{\beta_1, \dotsc, \beta_n\}$ be a basis of $\mathcal{K}$ and
\be
\item
The operator $U$  is unitary. 
\item
If $r > 1$ then $C(U) \geq C_E(U) > 0$. In particular, the operator $U$ generates an entangled with Schmidt number $r$. 
 If $\psi  = U(\alpha\otimes \beta )$
is a maximally entangled state generated by $U$ with Schmidt decomposition 
\[ \psi = \sum_{i=1}^{r}  s_i \alpha_i'\otimes\gamma_i ,\; s_1\geq s_2\geq \dotsb s_{r} >0 \text{ and } \sum s_i^2 = 1\]
then $C_E(U) = \sqrt{1- s_1^2}\leq \sqrt{1-1/r}\leq  \sqrt{1-1/m}$. 
\item
Assume that the family $\mathscr{U}_f$ is abelian and let the basis $\mathscr{B}$ consist of common eigenvector $U_i$ and let $U_i\beta_j = e^{i\pi \theta_{ij}} ,\; 0\leq i < m \text{ and } 0 \leq j < n$. Define the matrix $\Theta$ by $(\Theta)_{ij} = e^{i\pi \theta_{ij}}$. Then $r = \mathrm{rank}(\mathscr{U}_f) = \mathrm{rank}(\Theta)$. 

For unit vectors $\vec{a} = (a_1,  \dotsc, a_m) \in \Complex^m \text{ and }  \vec{b} = (b_1, \dotsc, b_n) \in \Complex^n$, let  
$A(\vec{a}, \vec{b})$ be the $m\times n$ matrix with entries is $a_{i}e^{i\pi \theta_{ij}}b_{j}$. 
%%%%%%%%%%%%%%%%%%%%%%%%%%%%%%%%%

%%%%%%%%%%%%%%%%%%%%%%%%%%%%%%%%%%%%%%%%%%
The capacity $C_E$ can be formulated as the solution to the following minimax problem. 
\beq \label{eq_entCap2}
C_E(U) = \sqrt{1 - \min_{\vec{a}, \vec{b},\norm{\vec{a}}=\norm{\vec{b}} =1} \max_{\vec{x}, \norm{\vec{x}} =1} \norm{Ax}^2}
\eeq
\item
The operator $U$ generates a  maximally entangled state $(C_E(U) =  \sqrt{1 - 1/m})$ in $\mathcal{H}\otimes\mathcal{K}$  if and only 
if there is a $\beta \in \mathcal{K}$ such that the vectors $\{U_1\beta,  \dotsc, U_m\beta\}$ are mutually orthogonal. Then the maximally entangled state  is of the form 
\[ U\left(\frac{\alpha_1+\dotsb + \alpha_{m}}{\sqrt{m}}\otimes \beta\right) = \frac{1}{\sqrt m} \sum_{i=1}^{m} \alpha_i\otimes U_i\beta\]
\ee
\end{thm}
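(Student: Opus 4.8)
The plan is to establish the four assertions in turn, leaning on the identity $C_E(U)=\sqrt{1-s_\mu(U)^2}$ proved just before the statement, where $s_\mu(U)=\min_{\alpha\otimes\beta\in\Pi_p}s_1(\alpha,\beta)$ and $s_1(\alpha,\beta)$ denotes the largest Schmidt coefficient of $U(\alpha\otimes\beta)$. For (1), note that \eqref{eq:gen_control1} already determines $U$ on all of $\mathcal H\otimes\mathcal K$: every vector has a unique expansion $\sum_i\alpha_i\otimes\gamma_i$ with $\gamma_i\in\mathcal K$, and linearity forces $U(\sum_i\alpha_i\otimes\gamma_i)=\sum_i\alpha_i\otimes U_i\gamma_i$; then for $\Phi=\sum_i\alpha_i\otimes\gamma_i$ and $\Psi=\sum_i\alpha_i\otimes\delta_i$, orthonormality of $\mathscr A$ and unitarity of each $U_i$ give $\langle U\Phi,U\Psi\rangle=\sum_i\langle U_i\gamma_i,U_i\delta_i\rangle=\sum_i\langle\gamma_i,\delta_i\rangle=\langle\Phi,\Psi\rangle$, so $U$ is an isometry of a finite-dimensional space, hence unitary. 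For (2), with $\alpha=\sum_i a_i\alpha_i$ one has $U(\alpha\otimes\beta)=\sum_i a_i\,\alpha_i\otimes U_i\beta$, whose Schmidt rank (the $\alpha_i$ orthonormal) equals $\dim\mathrm{span}\{a_iU_i\beta\}\le\mathrm{rank}(U_1\beta\ \dotsc\ U_m\beta)\le r$, with equality attained by a $\beta$ realizing the maximum in \eqref{eq:uf_rank} together with any $\alpha$ of full support (e.g.\ $\tfrac1{\sqrt m}\sum_i\alpha_i$); hence $r$ is exactly the largest Schmidt rank of an output of $U$, and if $r>1$ that output $\psi_0$ is entangled, so $C_E(U)\ge E(\psi_0)>0$ while $C(U)\ge C_E(U)$ by \eqref{eq:minimax1}. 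Finally $s_\mu(U)^2=\min_\psi s_1(U\psi)^2\ge 1/r$ since each reachable $U\psi$ has Schmidt rank $\le r$ and therefore largest squared Schmidt coefficient $\ge1/\mathrm{rank}(U\psi)\ge1/r$; if $\psi=U(\alpha\otimes\beta)$ attains the minimum, with Schmidt decomposition $\sum_i s_i\alpha_i'\otimes\gamma_i$, then $s_1=s_\mu(U)$, whence $C_E(U)=\sqrt{1-s_1^2}\le\sqrt{1-1/r}\le\sqrt{1-1/m}$.

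For (3), commutativity furnishes a common orthonormal eigenbasis $\mathscr B=\{\beta_j\}$ with $U_i\beta_j=e^{i\pi\theta_{ij}}\beta_j$. For $\beta=\sum_j b_j\beta_j$, the matrix $(U_1\beta\ \dotsc\ U_m\beta)$ written in $\mathscr B$ has $(j,i)$-entry $b_j\Theta_{ij}$, i.e.\ equals $\diag{\vec b}\,\Theta^{\mathsf T}$; its rank never exceeds $\mathrm{rank}(\Theta)$ and equals it whenever every $b_j\ne0$, so $r=\mathrm{rank}(\Theta)$. For the minimax formula, expanding $U(\alpha\otimes\beta)=\sum_{i,j}a_ie^{i\pi\theta_{ij}}b_j\,\alpha_i\otimes\beta_j$ (with $\alpha=\sum_i a_i\alpha_i$) shows the coefficient matrix of $U(\alpha\otimes\beta)$ in the orthonormal product basis $\{\alpha_i\otimes\beta_j\}$ is precisely $A(\vec a,\vec b)$; hence the Schmidt coefficients of $U(\alpha\otimes\beta)$ are the singular values of $A(\vec a,\vec b)$ and $s_1(\alpha,\beta)^2=\max_{\|\vec x\|=1}\|A(\vec a,\vec b)\vec x\|^2$. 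Minimizing over unit $\vec a,\vec b$ and substituting into $C_E(U)=\sqrt{1-s_\mu(U)^2}$ gives \eqref{eq_entCap2}.

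For (4), the easy implication is: if $\{U_1\beta,\dotsc,U_m\beta\}$ are mutually orthogonal for a unit $\beta$, then $U(\tfrac1{\sqrt m}\sum_i\alpha_i\otimes\beta)=\tfrac1{\sqrt m}\sum_i\alpha_i\otimes U_i\beta$ is already in Schmidt form with all coefficients $1/\sqrt m$, hence maximally entangled with $E$-value $\sqrt{1-1/m}$, which with \eqref{eq:entCap0} forces $C_E(U)=\sqrt{1-1/m}$ and exhibits the asserted form. Conversely, if $C_E(U)=\sqrt{1-1/m}$ then $s_\mu(U)=1/\sqrt m$ is attained at some $\alpha^\ast\otimes\beta^\ast$; every Schmidt coefficient of $U(\alpha^\ast\otimes\beta^\ast)$ is $\le 1/\sqrt m$ and they square-sum to $1$, so all $m$ of them equal $1/\sqrt m$ and the $\mathcal H$-reduced state is $\tfrac1m I$. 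Writing $\alpha^\ast=\sum_i a_i^\ast\alpha_i$ and $v_i=a_i^\ast U_i\beta^\ast$, the reduced operator on $\mathcal H$ has matrix $\langle v_j|v_i\rangle$ in $\{\alpha_i\}$; equating with $\tfrac1m I$ gives $\langle v_j|v_i\rangle=\tfrac1m\delta_{ij}$, whence $|a_i^\ast|=1/\sqrt m$ for every $i$ (so all nonzero) and $\langle U_j\beta^\ast|U_i\beta^\ast\rangle=0$ for $i\ne j$; thus $\beta=\beta^\ast$ works, and $U(\alpha^\ast\otimes\beta^\ast)$ is obtained from the displayed state by the local unitary $\diag{\sqrt m\,a_1^\ast,\dotsc,\sqrt m\,a_m^\ast}\otimes I$.

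The main obstacle is the converse in (4): converting the scalar equality $C_E(U)=\sqrt{1-1/m}$ into the structural statement. This rests on (i) $s_\mu(U)$ actually being attained---compactness of $\Pi_p$ and continuity of $\alpha\otimes\beta\mapsto s_1(\alpha,\beta)$; (ii) the rigidity that Schmidt coefficients bounded by $1/\sqrt m$ and square-summing to $1$ are forced to be uniform; and (iii) recognising the Gram matrix of the $v_i$ as the $\mathcal H$-reduced density operator, so that ``$=\tfrac1m I$'' is exactly orthonormality of the rescaled $U_i\beta^\ast$. A minor recurring point is fixing ``maximally entangled state generated by $U$'' to mean an output attaining $C_E(U)=\max_{\alpha\otimes\beta\in\Pi_p}E(U(\alpha\otimes\beta))$, and checking that the pre-theorem identity $C_E(U)=\sqrt{1-s_\mu(U)^2}$ is what licenses replacing $E$ by the largest Schmidt coefficient throughout.
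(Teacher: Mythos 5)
Your argument is correct, and for items 1--3 it follows essentially the same route as the paper: unitarity via preservation of an orthonormal product basis, identification of the Schmidt rank of $U(\alpha\otimes\beta)$ with $\dim\mathrm{span}\{a_iU_i\beta\}$ to get item 2 from \eqref{eq:entCap0}, and the factorization of the coefficient matrix through $\Theta$ together with the variational characterization of the largest singular value for item 3. The one place you genuinely diverge is the converse direction of item 4. The paper proceeds by Gram--Schmidt on $\{U_i\beta\}$, writes the coefficient matrix as $A=D_{\vec a}B$, invokes the singular value decomposition to conclude $A=VW/\sqrt m$ with $V,W$ unitary, and then reads off orthogonality of the rows of $B=D_{\vec a}^{-1}VW/\sqrt m$ plus the unit-row-norm constraint to force $|a_i|=1/\sqrt m$. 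You instead observe that maximal entanglement forces the $\mathcal H$-reduced density operator to be $\tfrac1m I$, and that this reduced operator is exactly the Gram matrix of the vectors $v_i=a_iU_i\beta$; the single identity $\inp{v_j}{v_i}=\tfrac1m\delta_{ij}$ then delivers both $|a_i|=1/\sqrt m$ and pairwise orthogonality of the $U_i\beta$ in one step. Your version is shorter and avoids the SVD and Gram--Schmidt machinery entirely; the paper's version has the mild advantage of exhibiting the explicit unitary $VW$ relating the two bases, but the two arguments prove the same rigidity statement and your rank/attainment and uniform-Schmidt-coefficient justifications are all sound.
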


\begin{proof}
Let $\mathscr{B}= \{\beta_1, \dotsc, \beta_{n}\}$ be an orthonormal basis in $\mathcal{K}$. To prove that $U$ is unitary it suffices to show that $\{U(\alpha_i\otimes \beta_j)  = \alpha_i \otimes U_i\beta_j|\; 1\leq i \leq m\text{ and } 1\leq j \leq n\}$ is also orthonormal. For $i\neq k$, $\alpha_i \otimes U_i\beta_j$ and $\alpha_k \otimes U_k\beta_j$ are obviously orthogonal. $\alpha_i \otimes U_i\beta_j$ and $\alpha_i \otimes U_i\beta_l$ are orthogonal because the $U_i$ are unitary operators. 

For the second item, suppose $\beta\in \mathcal{K}$ is a vector such that the set $\{U_i\beta, \; 1\leq i \leq m\}$ has maximal rank $r$. Assume, without loss of generality, that $\{U_1\beta, \dotsc, U_r\beta\}$ is an independent set and let 
\[ U_i\beta = \sum_{j=1}^{n} b_{ij} \beta_j, \; 1\leq i \leq m \text{ and } \alpha = \frac{\sum_{i=1}^{r}\alpha_i }{\sqrt{r}}\] 
Then $\xi = U(\alpha\otimes \beta) = \sum_{ij} b_{ij}\alpha_i\otimes \beta_j,\; 1\leq i \leq r, \; 1\leq j \leq n$ has $r\times n$ coefficient matrix $B$ of full rank $(=r)$. So $B$ has $r$ {\em positive} singular values $t_1 \geq t_2 \dotsb \geq t_r >0$ \cite{HJ1, Bhatia} and there exist bases $\{\alpha'_1, \dotsc, \alpha'_m\}$ $(\{\beta'_1, \dotsc, \beta'_{n}\})$ in $\mathcal{H}$ $(\mathcal{K})$ such that 
\[ U(\alpha\otimes\beta) = \sum_{i=1}^{r} t_i \alpha'_i\otimes \beta'_i \]
The rest of the assertions in item 2 follow from the relations \eqref{eq:entCap0}. 

Next, suppose the family $\mathscr{U}_f$ is abelian. Then there is a basis $\mathscr{B}$ in $\mathcal{K}$ consisting of \emph{common} eigenvectors of operators in  $\mathscr{U}_f$.  If $\alpha 
=\sum_ia_i\alpha_i$ and $\beta = \sum_jb_j\beta_j$ with $U_i\beta_j = e^{i\pi \theta_{ij}} = (\Theta)_{ij}$. Then 
\[ U(\alpha\otimes \beta) = \sum_{ij} a_i b_j\theta_{ij} \alpha_i\otimes\beta_i \] 
Let $D_{\vec{a}}$ be the diagonal matrix with entries $a_i$ similarly $D_{\vec{b}}$ be the diagonal matrix with entries $b_i$. We can write 
\beq \label{eq:coeff}
A(\vec{a}, \vec{b}) =  D_{\vec{a}}\Theta  D_{\vec{b}}
\eeq
where $\vec{a}$ (resp. $\vec{b}$) is the vector whose coordinates are $a_i$ (resp. $b_i$). 
It is easy to see that $r = \max_{\vec{a}, \vec{b}} \mathrm{rank}(D_{\vec{a}}\Theta  D_{\vec{b}})$. But for any $\vec{a}, \vec{b}$, $ \mathrm{rank}(D_{\vec{a}}\Theta D_{\vec{b}}) \leq  \mathrm{rank}(\Theta)$. The equality is attained if   $D_{\vec{b}} \text{ and } D_{\vec{b}}$ are nonsingular which is easily arranged. The first part of the assertion is proved. We also see that the rank of the coefficient matrix $A$ is $\leq r$. The second half of item 3 follows from the fact that the largest singular value $s_1$ of a matrix $M$ is given by \cite{HJ1, Bhatia} $s_1 = \max_{x,\norm{x}=1} \norm{Mx} $

For the final item suppose $\text{rank}(\mathscr{U}_f) = m$ and $\{U_1\beta,  \dotsc, U_m\beta\}$ are mutually orthogonal for some unit vector $\beta \in \mathcal{K}$. Let $\alpha = (\alpha_1+\dotsb + \alpha_m)/\sqrt{m}$. Then the vector $U(\alpha\otimes \beta) = (\alpha_1\otimes U_1\beta+\dotsb + \alpha_m\otimes U_m\beta)/\sqrt{m}$ is already in the  Schmidt form and $C_E(U) = \sqrt{1-1/m}$. 

To prove the converse, assume $U$ generates a maximally entangled state $\psi = U(\alpha\otimes \beta)$. 
%Choose the basis $\mathscr{A} = \{\alpha_1, \dotsc, \alpha_m\}$ for $\mathcal{H}$. 
From item 2 it follows that   the set $S =\{ U_1\beta, \dotsc, U_m\beta\} $ is linearly independent. Using Gram-Schmidt orthogonalization \cite{HJ1} obtain an orthonormal basis $\beta'_1, \dotsc, \beta'_m$ in the subspace spanned by $S$. So $U_i\beta = \sum_{j=1}^m b_{ij} \beta'_j, \; i,=1, \dotsc, m$. If $\alpha = a_1\alpha_1 + a_2\alpha_2 + \dotsb+a_m\alpha_m$ (note, $a_i \neq 0$ for all $i$). 
\[ \psi = U(\alpha\otimes \beta) = \sum_{i=1}^m a_i \alpha_i\otimes U_i\beta = \sum_{i,j=1}^m a_ib_{ij} \alpha_i\otimes\beta'_{j} \]
Note that the coefficient matrix $A$ with entries $a_ib_{ij}$ is square of order $m$. Since $\psi$ is a maximally entangled state its Schmidt decomposition is of the form
\[ \psi = \frac{\alpha'_1\otimes\gamma_1 + \alpha'_2 \otimes \gamma_2 + \dotsb + \alpha'_m\otimes \gamma_m}{\sqrt m} \]
The Schmidt coefficients are singular values of the matrix $A$ \cite{Nielsen}. Hence there exist unitary matrices $V, W$ of order $m$ such that $A = VDW$ where $D$ is a diagonal matrix consisting of singular values of $A$. Since the singular values all equal $1/\sqrt{m}$, $D = I_m /\sqrt m$ where $I_m$ is the identity matrix of order $m$. Hence $A= VW/\sqrt m$. 
Write 
\[ D_{\vec{a}} = \begin{pmatrix}
a_1 &0 \cdots & 0 \\
0 &\ddots & 0 &\\
0 & \cdots &  a_m &
 \end{pmatrix}, \; 
 B =\begin{pmatrix}
b_{11} & \cdots & b_{1m} \\
\vdots &\vdots &\vdots\\
b_{m1} & \cdots &  b_{mm} 
 \end{pmatrix}
 \]
Then 
\[ A= D_{\vec{a}}B = \frac{VW}{\sqrt m} \text{ and }B = \frac{ D_{\vec{a}}^{-1}}{\sqrt m} \cdot VW\]
Since $D_{\vec{a}}$ is diagonal every row of $B$ is a multiple of the corresponding row of $VW$. The latter is a unitary matrix. So the rows of $B$ are mutually orthogonal. Recall that the $i^{\text{th}}$ row of $B$ represents $U_i\beta$ in the {\em orthonormal} basis $\{\beta'_1, \dotsc, \beta'_m\}$ and hence $\sum_j \modulus{b_{ij}}^2 = 1$. Therefore, $a_i = 1/\sqrt m$ and 
\[ \inp{U_i\beta}{U_j\beta} = \delta_{ij} \]
The proof is complete. 
\end{proof} 
\noindent
A different approach to generating maximally entangled states using representations of finite groups is given in reference \cite{Cohen}. 
\subsection{Examples} 
This subsection provides some examples of GCO. First observe that for an arbitrary unitary family $\{U_1, \dotsc, U_m\}$ if we replace all operators with $U_i \longrightarrow \hconj{U}_1 U_i$, the corresponding GCO, $U$   does not change essentially (e.g. entangling capacities remain same). So  take $U_1 = I$  (identity operator in appropriate space). 
The computations are manageable when $m=n=2$, so we start there. It is well-known that the 2-qubit case is special. For example, for 2 qubits the PPT (positive partial state) criterion is necessary {\em and} sufficient to determine separability, not true in higher dimensions. 
\begin{enumerate}
\item
Suppose $\{\alpha_1, \alpha_2\}$ is an  arbitrary orthonormal basis $\mathcal{H}_2$. The general form of a GCO is given by
\[ U(\alpha_1\otimes \beta) = \alpha_1\otimes\beta \text{ and } U(\alpha_2\otimes \beta) = \alpha_2\otimes U_2\beta,\; \beta \in \mathcal{H}_2 \]
The projective invariance of the underlying metric implies that we may take $U_2$ to be of the form 
\[ U_2 = \begin{pmatrix} 1 & 0 \\ 0 & e^{i\pi \theta} \end{pmatrix},\; 0 \leq \theta \leq 1 \]
in the standard basis $\epsilon_1, \epsilon_2$. Assume that $\alpha_1$ and $\alpha_2$ also represent the standard basis (no loss of generality). 
So for \(\alpha = \begin{pmatrix} a_0\\a_1 \end{pmatrix} \; \beta  = \begin{pmatrix} b_0\\b_1 \end{pmatrix}\),
\[U(\alpha\otimes \beta) = a_1b_1 \epsilon_1\otimes\epsilon_1 + a_1b_2 \epsilon_1\otimes\epsilon_2 
+a_2b_1  \epsilon_2\otimes\epsilon_1 + a_2b_2  e^{i\pi \theta}  \epsilon_2\otimes\epsilon_2
\]
A little analysis shows that one may assume $a_1, a_2, b_1, b_2 \geq 0$. The coefficient matrix is 
\[ A = \begin{pmatrix} a_1b_1 & a_1b_2 \\a_2b_1 & a_2b_2  e^{i\pi \theta} \end{pmatrix}, \; a_1^2 + a_2^2 = b_1^2+b_2^2 = 1 \]
A tedious calculation shows that the  singular values of $A$ are 
\[s_1, s_2 = \sqrt{\frac{1 \pm \sqrt{1  -16a_1^2a_2^2b_1^2b_2^2\sin^2(\pi\theta/2)}}{2}}\]
Therefore, the larger singular value $s_1$ is minimal when $a_1^2=b_1^2=1/2$. This minimal value is $\frac{1+\cos{(\pi\theta/2)}}{2}$. So, $C_E(U) = \sqrt{\frac{1-\cos(\pi\theta/2)}{2}}$ attains its maximum value $\sqrt{1/2}$ when $\theta=1$. This is as expected, for then $U_2$ acts as Pauli operator $Z$ in some basis in 2 dimensions. 
\item
Let $\mathcal{H} = \mathcal{K}$. Assume that $\mathscr{A} = \{\alpha_1, \dotsc, \alpha_n\}$ is an orthonormal basis in  $\mathcal{H}$ and $Z$ be quantum Fourier transform operator (QFT), $Z\alpha_j = \omega^{j-1}\alpha_j, \; j=1, \dotsc, n$,  where $\omega = e^{\frac{2\pi i}{n}}$ is a primitive $n^{\text{th}}$ root of 1. Define an operator $U$ over  {\em abelian} unitary family $ \mathscr{U}_f = \{ I=Z^0, Z, \dotsc, Z^{n-1}\} $ by 
\[ U(\alpha_j\otimes \beta) = \alpha_j \otimes Z^{j-1}\beta, \; i=1, \dotsc, n \text{ and } \beta \in \mathcal{H} \]
Let $\beta = \frac{\sum_i \alpha_i}{\sqrt n}$. Since $\inp{Z^i\beta}{Z^j\beta} =\delta_{ij}$, $U(\beta\otimes \beta)$ is maximally entangled (item 4 in Theorem \ref{thm:gco1}). 
\item
This example is a nonabelian generalization of the previous one. Let $m=n=3$. and $\{\alpha_1, \alpha_2, \alpha_3\}$ be a basis in $\mathcal{H}= \mathcal{K}$. The operator $Z$ is as defined in the previous example with $\omega = e^{\frac{2\pi i}{3}}$ and $X$ is the cyclic shift operator 
\[X\alpha_1=\alpha_2, \; X\alpha_2 =\alpha_3 \text{ and } X\alpha_3 = \alpha_1 \] 
Define a GCO $U$, 
\[ U(\alpha_1\otimes \beta) = \alpha_1\otimes \beta,\;  U(\alpha_2\otimes \beta) = \alpha_2\otimes Z\beta \text{ and }
U(\alpha_3\otimes \beta) = \alpha_3\otimes X\beta \]
The unitary operator $U$, although defined overs a nonabeliam family, has maximal capacity $C_E(U) = \sqrt{2/3}$. That is, it generates a maximally entangled state. Let $\beta = (\alpha_1 + \omega^2\alpha_2 + \omega^2\alpha_3)/\sqrt{3}$. Then $\beta_1 =\beta, \beta_2 = Z\beta \text{ and } \beta_3 = X\beta$ constitute an orthonormal triple. Therefore, state
\[ U(\frac{\alpha_1+\alpha_2 +\alpha_3}{\sqrt 3}\otimes \beta) = \frac{\alpha_1\otimes\beta_1 + \alpha_2\otimes\beta_2+\alpha_3\otimes\beta_3}{\sqrt3} \] 
is maximally entangled.  
\end{enumerate}
\section{Minimax duality and other questions} 
The entangling capacity $C(U)$ of a unitary operator $U$ on $\mathcal{H}\otimes \mathcal{K}$ is defined by  equation \eqref{eq:entcap1}. It depends on the underlying trace ($L_1$) norm. The simpler expression \eqref{eq:entcap2} is possible because the maximum occurs at an extreme point of the convex set of states. The dual capacity $C_E(U)$ is, by definition, the distance of the generated state to the manifold of {\em pure} product states. The resulting quantity in \eqref{eq:entcap_dual2} is dual to $C_U$. But if  we define 
\beq \label{eq:entcap_dual3}
%\begin{split}
\tilde{C}_E(U) = \max_{\ket{\psi}\in \Pi_p} \min_{\rho \in \mathcal{S}} \tt{Tr}(|\pj{\phi}- U\rho\hconj{U}|/2) 
%\end{split}
\eeq
over separable states $\mathcal{S}$ there is no guarantee that it will reduce to the expression in \eqref{eq:entcap_dual2}. The basic reason is that the separable state closest to an entangled {\em pure} state may not be a pure (product) state\footnote{This {\em is} the case in case of metrics defined via {\em quantum fidelity}, for example, metrics induced by Bures distance.}. But we deal only with the measure $C_E(U)$ here. Recall the minimax inequality 
\[ C_E(U) \leq C(U) \]
A basic question in minimax problems is to seek conditions under which the above inequality becomes an equality. Many  sufficient conditions involve convexity/concavity properties (see \cite{Rockafellar} for some classical results and \cite{Du} for a variety of extensions). Unfortunately, some sets of interest to the present problem are not convex. However, it is possible to get some concrete results in low dimensions.  Let me recap the process of computing $C(U)$ for some unitary operator acting on a product Hilbert space. We have to find a product unitary  $V_1\otimes V_2$ such that the distance 
\[ d_\pi(U, V_1\otimes V_2) = \max_{\psi \in \Pi_p} \sqrt{1 - \modulus{\inpr{\psi}{U^\dagger V_1\otimes V_2}{\psi}}^2}\]
is minimized. For fixed $V_1\otimes V_2$, maximizing the right hand side seems difficult because of the restriction $\psi \in \Pi_p$, the set of pure product states. the result in theorem \ref{thm:eigenChar} cannot be used, for in general  $U$ may have eigenstates which are {\em not} product states. However, in case the operator $U$ does have a complete set of orthonormal eigenstates (a basis) then the theorem may be applied under some conditions. Let us start with control-Z gate in a 2-qubit setting. In appropriate basis (here assumed to be standard) 
\[
U_Z = \begin{pmatrix} I & 0 \\ 0 & Z \\  \end{pmatrix}, \; Z = \begin{pmatrix} 1 & 0 \\0 & -1 \end{pmatrix}
\]
Here $I$ is the identity matrix in 2 dimensions. Let 
\[ V_1 = V_2 = \begin{pmatrix} 1 & 0 \\ 0 & i \end{pmatrix} \text{ and } V_1 \otimes V_1 = \begin{pmatrix} 1 & 0 & 0 & 0 \\ 0 & i & 0 & 0 \\ 0 & 0 & i & 0 \\ 0 & 0 & 0 & -1 \end{pmatrix} \]
Making $V_1\otimes V_2$ diagonal in the (product) eigenbasis of $U_Z$ is a simple way of ensuring that $\hconj{U_Z}V_1\otimes V_2$ also has an orthonormal basis of product  eigenstates. In the present case, we can apply theorem \ref{thm:eigenChar} to obtain 
$ d_\pi(U_Z, V_1\otimes V_2) = \sin{\pi/4} = 1/\sqrt{2}= C_E(U_Z)$. But this implies that $C(U_Z) = C_E(U_Z) = 1/\sqrt{2}$. With a bit of extra calculations it can be shown that for any unitary $W$ in 2-d, the 2-qubit operator $U_W = \begin{pmatrix} I & 0 \\ 0 & W \end{pmatrix}$ also satisfies $C(U_W) = C_E(U_W)$. But of couse,  the 2-qubit case is special. In fact, if $\dim{(\mathcal{H})} = \dim{(\mathcal{K})} = 3$ and $U_Z = \begin{pmatrix} I & 0 & 0 \\ 0 & Z & 0 \\ 0 & 0 & Z^2 \end{pmatrix} $ ($Z$ is the QFT operator in 3 dimensions defined in the examples above) then it is possible to prove the following. For any product operator $V_1\otimes V_2$ \emph{commuting} with $U$, $d_\pi(U, V_1\otimes V_2) > C_E(U_Z) = 1/\sqrt{3}$. It is possible that $U_Z$ is better approximated by a product operator which does {\em not} commute with it. 

There are extensive results on duality in minimax theories. Many of the sufficient conditions for equality in the minimax inequality involve some notions of convexity (see \cite{Du}). Further investigations along these lines would require deeper analysis of the functions involved and their domains. Perhaps more fruitful from quantum physics and information perspective,  is   to extend the analysis to more general quantum operations, in particular CP maps. The general definition of entangling capacity given in \eqref{eq:entcapG} may be used to capture wide range of operational situations.  Questions of duality in minimax theories may be better addressed in the general settings because now we can choose sets with  some nice structures. 

Another possible avenue for exploration is the study of time evolution of an interacting composite system. For such systems the entangling capacities can be used as a measure of the strength of coupling. Now, given a coupled system with an interaction Hamiltonian $H$, we can obtain the corresponding evolution operator $U(t, t_0)$. For a closed system, this is a unitary operator and the capacities $C(U)\text{ and } C_E(U)$ become functions of time. So we may ask how do the capacities change with time? Or equivalently, how does the coupling strength change with time? Note that in this situation we may have to revisit the definition in \eqref{eq:entcap1}. Perhaps, more interesting would be to investigate evolution in an open system. Now the system is coupled to an environment. In a Hamiltonian setting the entangling capacity as function of time will shed light on the effect of environment over time. But here too the basic definitions should be re-examined.  Let us finally note that minmax problems are generally hard, but some recent numerical algorithms\cite{Lin} can be used for efficient numerical computations, at least in lower dimensions. 

\bibliography{entanglement}
\bibliographystyle{unsrt}
\end{document}